%% file: paper.tex
\documentclass[sigconf,screen]{acmart}
\def\BibTeX{{\rm B\kern-.05em{\sc i\kern-.025em b}\kern-.08em
    T\kern-.1667em\lower.7ex\hbox{E}\kern-.125emX}}

\usepackage[mathscr]{eucal}
\usepackage{graphicx}
\usepackage{color}
\usepackage{todonotes}
\usepackage{algorithm,algpseudocode}
\usepackage{xspace}
\usepackage{etoolbox}
\usepackage{xcolor}
\usepackage{multirow}
\usepackage{enumitem}
\usepackage[normalem]{ulem}
\usepackage{bm}
\usepackage{subcaption}
\usepackage{booktabs}

\usepackage{mathtools}

\newtheorem{definition}{Definition}
\newtheorem{lemma}{Lemma}
\newtheorem{theorem}{Theorem}

\usepackage{tikz,pgfplots}
\usetikzlibrary{matrix, decorations, patterns, positioning, shapes, 3d, calc, intersections, arrows, fit, pgfplots.fillbetween}
\usepackage{tikz-3dplot}

\usepackage{cleveref}

\Crefname{definition}{Definition}{Definitions}
\crefname{definition}{Def.}{Defs.}
\Crefname{theorem}{Theorem}{Theorems}
\crefname{theorem}{Theorem}{Theorems}
\Crefname{lemma}{Lemma}{Lemmas}
\crefname{lemma}{Lemma}{Lemmas}
\Crefname{section}{Section}{Sections}
\crefname{section}{\S\hspace{-.5mm}}{\S}
\crefname{subsection}{\S\hspace{-.5mm}}{\S}
\crefname{algorithm}{Alg.}{Algs.}
\Crefname{algorithm}{Algorithm}{Algorithms}
\crefname{figure}{Fig.}{Figs.}
\Crefname{figure}{Figure}{Figures}
\crefname{table}{Tab.}{Tabs.}
\Crefname{table}{Table}{Tables}

\definecolor{pastelviolet}{rgb}{0.8, 0.6, 0.79}
\definecolor{babyblueeyes}{rgb}{0.63, 0.79, 0.95}
\definecolor{pastelyellow}{rgb}{0.99, 0.99, 0.59}
\definecolor{pastelgreen}{rgb}{0.47, 0.87, 0.47}
\definecolor{pastelred}{rgb}{1.0, 0.41, 0.38}
\colorlet{pattern blue}{blue!60}

\newcommand{\V}[2][]{{\bm{#1\mathbf{\MakeLowercase{#2}}}}} 
\newcommand{\M}[2][]{{\bm{#1\mathbf{\MakeUppercase{#2}}}}} 
\newcommand{\Tra}{{\sf T}}

\newcommand{\Real}{\mathbb{R}}
\newcommand{\vc}[1]{\bm{#1}}

\algrenewcommand{\algorithmicforall}{\textbf{for each}}
\algrenewcomment[1]{\hspace{\algorithmicindent} \(\triangleright\) \emph{#1}} 

\definecolor{pdfurlcolor}{rgb}{0,0,0.6}
\definecolor{pdfcitecolor}{rgb}{0,0.6,0}
\definecolor{pdflinkcolor}{rgb}{0.6,0,0}

\usepackage{pgfplots}
\usepackage{pgfplotstable}
\usepackage{tikz}
\usepackage{ifthen}
\usetikzlibrary{calc} 
\pgfplotsset{compat=1.16}

\definecolor{color3}{RGB}{70,130,180}   
\definecolor{color2}{RGB}{220,20,60}    
\definecolor{color1}{RGB}{34,139,34}    
\definecolor{color4}{RGB}{255,165,0}    
\definecolor{color5}{RGB}{128,128,128}  
\definecolor{color6}{RGB}{138,43,226}   
\definecolor{color7}{RGB}{255,105,180}  
\definecolor{color8}{RGB}{0,206,209}    
\definecolor{color9}{RGB}{210,105,30}   

\newcommand{\algoa}{1Dred}
\newcommand{\algob}{1Dnored}
\newcommand{\gen}{MatMul_Gen}
\newcommand{\comm}{MatMul_Comm}
\newcommand{\impla}{CPP}
\newcommand{\implb}{Python}
\newcommand{\implc}{SLATE}
\newcommand{\proca}{4}
\newcommand{\procb}{8}
\newcommand{\procc}{16}
\newcommand{\procd}{32}
\newcommand{\proce}{64}
\newcommand{\procf}{128}

\newif\ifylabel \ylabeltrue
\newif\iflegend \legendtrue

\newcommand{\matmulCPUrankFiveH}{
  ybar stacked,
  reverse legend,
  bar width=4pt,
  width=5.2cm, 
  height=4.5cm,
  enlarge x limits=0.05,    
  \ifylabel
  	ylabel={Time (seconds)}, 
  \fi
  y label style={yshift=-0.2cm},
  ymin=0,
  ymax=0.26,
  symbolic x coords={Algo-\gen-\proca,Algo-\comm-\proca,,Algo-\gen-\procb,Algo-\comm-\procb,,Algo-\gen-\procc,Algo-\comm-\procc,,Algo-\gen-\procd,Algo-\comm-\procd,,Algo-\gen-\proce,Algo-\comm-\proce,,Algo-\gen-\procf,Algo-\comm-\procf},
  xticklabels={Gen,Gen,Gen,Gen,Gen,Gen,Comm,Comm,Comm,Comm,Comm,Comm},  
  xtick=data,
  x tick label style={xshift=0.19cm, rotate=45, anchor=east}, 
  xlabel={P}, 
  xlabel style={yshift=-0.45cm}, 
  legend style={
        at={(0.65,1.1)},     
        anchor=west,         
    },
  legend columns=2,
  legend style={draw=none, cells={align=left}, nodes={scale=0.7}},
}

\newcommand{\matmulCPUrankFiveT}{
  ybar stacked,
  reverse legend,
  bar width=4pt,
  width=5.2cm, height=4.5cm,
  enlarge x limits=0.05,    
  \ifylabel
  \fi
  y label style={yshift=-0.2cm},
  ymin=0,
  ymax=.85,
  symbolic x coords={Algo-\gen-\proca,Algo-\comm-\proca,,Algo-\gen-\procb,Algo-\comm-\procb,,Algo-\gen-\procc,Algo-\comm-\procc,,Algo-\gen-\procd,Algo-\comm-\procd,,Algo-\gen-\proce,Algo-\comm-\proce,,Algo-\gen-\procf,Algo-\comm-\procf},
  xticklabels={Gen,Gen,Gen,Gen,Gen,Gen,Comm,Comm,Comm,Comm,Comm,Comm},  
  xtick=data,
  x tick label style={xshift=0.19cm, rotate=45, anchor=east}, 
  xlabel={P}, 
  xlabel style={yshift=-0.45cm}, 
  legend pos={north west},
  legend columns=2,
  legend style={draw=none, cells={align=left}, nodes={scale=0.7}},
}

\newcommand{\matmulGPUrankFiveH}{
  ybar stacked,
  reverse legend,
  bar width=4pt,
  width=5.2cm, height=4.5cm,
  enlarge x limits=0.05,    
  \ifylabel
  	ylabel={Time (seconds)}, 
  \fi
  y label style={yshift=-0.2cm},
  ymin=0,
  ymax=0.3,
  symbolic x coords={Algo-\gen-\proca,Algo-\comm-\proca,,Algo-\gen-\procb,Algo-\comm-\procb,,Algo-\gen-\procc,Algo-\comm-\procc,,Algo-\gen-\procd,Algo-\comm-\procd,,Algo-\gen-\proce,Algo-\comm-\proce,,Algo-\gen-\procf,Algo-\comm-\procf},
  xticklabels={Gen,Gen,Gen,Gen,Gen,Gen,Comm,Comm,Comm,Comm,Comm,Comm},  
  xtick=data,
  x tick label style={xshift=0.19cm, rotate=45, anchor=east}, 
  xlabel={P}, 
  xlabel style={yshift=-0.45cm}, 
    legend style={
        at={(0.65,1.1)},     
        anchor=west,         
    },
  legend columns=2,
  legend style={draw=none, cells={align=left}, nodes={scale=0.7}},
}

\newcommand{\matmulGPUrankFiveT}{
  ybar stacked,
  reverse legend,
  bar width=4pt,
  width=5.2cm, height=4.5cm,
  enlarge x limits=0.05,    
  \ifylabel
  \fi
  y label style={yshift=-0.2cm},
  ymin=0,
  ymax=0.4,
  symbolic x coords={Algo-\gen-\proca,Algo-\comm-\proca,,Algo-\gen-\procb,Algo-\comm-\procb,,Algo-\gen-\procc,Algo-\comm-\procc,,Algo-\gen-\procd,Algo-\comm-\procd,,Algo-\gen-\proce,Algo-\comm-\proce,,Algo-\gen-\procf,Algo-\comm-\procf},
  xticklabels={Gen,Gen,Gen,Gen,Gen,Gen,Comm,Comm,Comm,Comm,Comm,Comm},  
  xtick=data,
  x tick label style={xshift=0.19cm, rotate=45, anchor=east}, 
  xlabel={P}, 
  xlabel style={yshift=-0.45cm}, 
  legend pos={north west},
  legend columns=2,
  legend style={draw=none, cells={align=left}, nodes={scale=0.7}},
}

\newcommand{\nystromCPUrankFiveH}{
  ybar stacked,
  reverse legend,
  bar width=4pt,
  width=5.2cm, height=4.5cm,
  enlarge x limits=0.05,    
  \ifylabel
  	ylabel={Time (seconds)}, 
  \fi
  y label style={yshift=-0.2cm},
  ymin=0,
  ymax=0.035,
  symbolic x coords={Algo-\algoa-\proca,Algo-\algob-\proca,,Algo-\algoa-\procb,Algo-\algob-\procb,,Algo-\algoa-\procc,Algo-\algob-\procc,,Algo-\algoa-\procd,Algo-\algob-\procd,,Algo-\algoa-\proce,Algo-\algob-\proce,,Algo-\algoa-\procf,Algo-\algob-\procf},
  xticklabels={Redist,Redist,Redist,Redist,Redist,Redist,No-Redist,No-Redist,No-Redist,No-Redist,No-Redist,No-Redist},  
  xtick=data,
  x tick label style={xshift=0.1cm, rotate=65, anchor=east}, 
  xlabel={P}, 
  xlabel style={yshift=-0.15cm}, 
  legend style={
        at={(0.55,1.1)},     
        anchor=west,         
    },
  legend columns=3,
  legend style={draw=none, cells={align=left}, nodes={scale=0.7}},
}

\newcommand{\nystromCPUrankFiveT}{
  ybar stacked,
  reverse legend,
  bar width=4pt,
  width=5.2cm, height=4.5cm,
  enlarge x limits=0.05,    
  \ifylabel
  \fi
  y label style={yshift=-0.2cm},
  ymin=0,
  ymax=0.2,
  symbolic x coords={Algo-\algoa-\proca,Algo-\algob-\proca,,Algo-\algoa-\procb,Algo-\algob-\procb,,Algo-\algoa-\procc,Algo-\algob-\procc,,Algo-\algoa-\procd,Algo-\algob-\procd,,Algo-\algoa-\proce,Algo-\algob-\proce,,Algo-\algoa-\procf,Algo-\algob-\procf},
  xticklabels={Redist,Redist,Redist,Redist,Redist,Redist,No-Redist,No-Redist,No-Redist,No-Redist,No-Redist,No-Redist},  
  xtick=data,
  x tick label style={xshift=0.1cm, rotate=65, anchor=east}, 
  xlabel={P}, 
  xlabel style={yshift=-0.15cm}, 
  legend pos={north east},
  legend style={draw=none, cells={align=left}, nodes={scale=0.7}},
}

\newcommand{\nystromGPUrankFiveH}{
  ybar stacked,
  reverse legend,
  bar width=4pt,
   width=5.2cm, height=4.5cm,
  enlarge x limits=0.05,    
  \ifylabel
  	ylabel={Time (seconds)}, 
  \fi
  y label style={yshift=-0.2cm},
  ymin=0,
  ymax=0.15,
  symbolic x coords={Algo-\algoa-\proca,Algo-\algob-\proca,,Algo-\algoa-\procb,Algo-\algob-\procb,,Algo-\algoa-\procc,Algo-\algob-\procc,,Algo-\algoa-\procd,Algo-\algob-\procd,,Algo-\algoa-\proce,Algo-\algob-\proce,,Algo-\algoa-\procf,Algo-\algob-\procf},
  xticklabels={Redist,Redist,Redist,Redist,Redist,Redist,No-Redist,No-Redist,No-Redist,No-Redist,No-Redist,No-Redist},  
  xtick=data,
  x tick label style={xshift=0.1cm, rotate=65, anchor=east}, 
  xlabel={P}, 
  xlabel style={yshift=-0.45cm}, 
   legend style={
        at={(0.45,1.1)},     
        anchor=west,         
    },
  legend columns=3,
  legend style={draw=none, cells={align=left}, nodes={scale=0.7}},
}

\newcommand{\nystromGPUrankFiveT}{
  ybar stacked,
  reverse legend,
  bar width=4pt,
   width=5.2cm, height=4.5cm,
  enlarge x limits=0.05,    
  \ifylabel
  \fi
  y label style={yshift=-0.2cm},
  ymin=0,
  ymax=0.3,
  symbolic x coords={Algo-\algoa-\proca,Algo-\algob-\proca,,Algo-\algoa-\procb,Algo-\algob-\procb,,Algo-\algoa-\procc,Algo-\algob-\procc,,Algo-\algoa-\procd,Algo-\algob-\procd,,Algo-\algoa-\proce,Algo-\algob-\proce,,Algo-\algoa-\procf,Algo-\algob-\procf},
  xticklabels={Redist,Redist,Redist,Redist,Redist,Redist,No-Redist,No-Redist,No-Redist,No-Redist,No-Redist,No-Redist},  
  xtick=data,
  x tick label style={xshift=0.1cm, rotate=65, anchor=east}, 
  xlabel={P}, 
  xlabel style={yshift=-0.45cm}, 
  legend pos={north west},
  legend columns=1,
  legend style={draw=none, cells={align=left}, nodes={scale=0.7}},
}

\newcommand{\cppVSpython}{
  ybar stacked,
  reverse legend,
  bar width=5pt,
  width=8.4cm, height=6.5cm,
  enlarge x limits=0.05,    
  \ifylabel
  	ylabel={Time (seconds)}, 
  \fi
  y label style={yshift=-0.2cm},
  ymin=0,
  ymax=180,
  symbolic x coords={Algo-\impla-\proca,Algo-\implb-\proca,Algo-\implc-\proca,,Algo-\impla-\procb,Algo-\implb-\procb,Algo-\implc-\procb,,Algo-\impla-\procc,Algo-\implb-\procc,Algo-\implc-\procc,,Algo-\impla-\procd,Algo-\implb-\procd,Algo-\implc-\procd,,Algo-\impla-\proce,Algo-\implb-\proce,Algo-\implc-\proce,,Algo-\impla-\procf,Algo-\implb-\procf,Algo-\implc-\procf},
  xticklabels={CPP,CPP,CPP,CPP,CPP,CPP,Python,Python,Python,Python,Python,Python,SLATE,SLATE,SLATE,SLATE,SLATE,SLATE},  
  xtick=data,
  x tick label style={xshift=0.19cm, rotate=45, anchor=east}, 
  xlabel={P}, 
  xlabel style={yshift=-0.45cm}, 
  legend pos={north east},
  legend columns=2,
  legend style={draw=none, cells={align=left}, nodes={scale=0.7}},
}

\newcommand{\nystromCPUrankFiveTall}{
  ybar stacked,
  reverse legend,
  bar width=4pt,
  width=5.2cm, height=5.5cm,
  enlarge x limits=0.05,    
  \ifylabel
  	ylabel={Time (seconds)}, 
  \fi
  y label style={yshift=-0.2cm},
  ymin=0,
  ymax=21.0,
  symbolic x coords={Algo-\algoa-\proca,Algo-\algob-\proca,,Algo-\algoa-\procb,Algo-\algob-\procb,,Algo-\algoa-\procc,Algo-\algob-\procc,,Algo-\algoa-\procd,Algo-\algob-\procd,,Algo-\algoa-\proce,Algo-\algob-\proce,,Algo-\algoa-\procf,Algo-\algob-\procf},
  xticklabels={Redist,Redist,Redist,Redist,Redist,Redist,No-Redist,No-Redist,No-Redist,No-Redist,No-Redist,No-Redist},  
  xtick=data,
  x tick label style={xshift=0.1cm, rotate=65, anchor=east}, 
  xlabel={P}, 
  xlabel style={yshift=-0.15cm}, 
  legend style={
        at={(-0.18,1.1)},     
        anchor=west,         
    },
  legend columns=6,
  legend style={draw=none, cells={align=left}, nodes={scale=0.7}},
}

\newcommand{\nystromGPUrankFiveTall}{
  ybar stacked,
  reverse legend,
  bar width=4pt,
  width=5.2cm, height=5.5cm,
  enlarge x limits=0.05,    
  \ifylabel
  \fi
  y label style={yshift=-0.2cm},
  ymin=0,
  ymax=1.3,
  symbolic x coords={Algo-\algoa-\proca,Algo-\algob-\proca,,Algo-\algoa-\procb,Algo-\algob-\procb,,Algo-\algoa-\procc,Algo-\algob-\procc,,Algo-\algoa-\procd,Algo-\algob-\procd,,Algo-\algoa-\proce,Algo-\algob-\proce,,Algo-\algoa-\procf,Algo-\algob-\procf},
  xticklabels={Redist,Redist,Redist,Redist,Redist,Redist,No-Redist,No-Redist,No-Redist,No-Redist,No-Redist,No-Redist},  
  xtick=data,
  x tick label style={xshift=0.1cm, rotate=65, anchor=east}, 
  xlabel={P}, 
  xlabel style={yshift=-0.15cm}, 
  legend style={
        at={(0.05,1.1)},     
        anchor=west,         
    },
  legend columns=5,
  legend style={draw=none, cells={align=left}, nodes={scale=0.7}},
}

\usepackage{etoolbox}
\makeatletter
\patchcmd{\maketitle}{\@copyrightspace}{}{}{}
\makeatother

\begin{document}

\title{Communication Lower Bounds and Algorithms for Sketching with Random Dense Matrices}

\acmYear{2026}\copyrightyear{2026}
\setcopyright{cc}
\setcctype[4.0]{by}
\acmConference[SPAA '26]{38th ACM Symposium on Parallelism in Algorithms and Architectures}{July 6--10, 2026}{London, United Kingdom}
\acmBooktitle{38th ACM Symposium on Parallelism in Algorithms and Architectures (SPAA '26), July 6--10, 2026, London, United Kingdom}
\acmDOI{10.1145/3816782.3819223}
\acmISBN{979-8-4007-2761-0/26/07}

\author{Hussam Al~Daas}
\orcid{0000-0001-9355-4042}
\affiliation{%
	\institution{Rutherford Appleton Laboratory}
	\city{Didcot}
	\state{Oxfordshire}
	\country{UK}
}
\email{hussam.al-daas@stfc.ac.uk}

\author{Grey Ballard}
\orcid{0000-0003-1557-8027}
\affiliation{%
	\institution{Wake Forest University}
	\city{Winston-Salem}
	\state{NC}
	\country{USA}}
\email{ballard@wfu.edu}

\author{Laura Grigori}
\orcid{0000-0002-5880-1076}
\affiliation{%
	\institution{EPFL and PSI}
	\city{Lausanne}
	\country{Switzerland}}
\email{laura.grigori@epfl.ch}

\author{Md Taufique Hussain}
\orcid{0000-0001-9768-0564}
\affiliation{%
	\institution{Wake Forest University}
	\city{Winston-Salem}
	\state{NC}
	\country{USA}}
\email{hussaint@wfu.edu}

\author{Suraj Kumar}
\orcid{0009-0001-1449-0165}
\affiliation{%
	\institution{Inria Lyon}
	\city{Lyon}
	\country{France}
}
\email{suraj.kumar@inria.fr}

\author{Mohammad Marufur Rahman}
\orcid{0000-0002-5355-482X}
\affiliation{%
	\institution{Wake Forest University}
	\city{Winston-Salem}
	\state{NC}
	\country{USA}}
\email{rahmm224@wfu.edu}

\author{Kathryn Rouse}
\orcid{0009-0001-4045-0423}
\affiliation{%
	\institution{Inmar Intelligence}
	\city{Winston-Salem}
	\state{NC}
	\country{USA}
}
\email{kathryn.rouse@inmar.com}

\renewcommand{\shortauthors}{H. Al Daas, G. Ballard, L. Grigori, M. T. Hussain, S. Kumar, M. M. Rahman, and K. Rouse}
\begin{abstract}
    \input{abstract.tex}

\end{abstract}

\begin{CCSXML}
	<ccs2012>
	<concept>
	<concept_id>10003752.10003809.10010170</concept_id>
	<concept_desc>Theory of computation~Parallel algorithms</concept_desc>
	<concept_significance>500</concept_significance>
	</concept>
	</ccs2012>
\end{CCSXML}

\ccsdesc[500]{Theory of computation~Parallel algorithms}

\keywords{Communication costs, Matrix multiplications, Parallel algorithms, Randomized linear algebra, Nystr\"{o}m approximation}

\maketitle
\section{Introduction}
\label{sec:intro}

Many important linear algebra problems, such as computing low-rank
approximations of a matrix, selecting a subset of its columns, solving
linear systems, or computing eigenvalue decompositions, can be
efficiently solved using randomization techniques that allow reducing
computational and communication costs while providing accuracy
guarantees with high probability, see~\cite{cortinovis2026adaptive, bucci2025numerical, BalG22, frangella2023randomized}. These techniques rely on
\emph{sketching}, a dimensionality reduction approach in which a
matrix $\M{A} \in \mathbb{R}^{n_1 \times n_2}$ is mapped to a
lower-dimensional representation by multiplying it with a random
matrix $\M{\Omega} \in \mathbb{R}^{n_2 \times r}$, where generally $r \ll n_2$.
Depending on the application, the matrix $\M{A}$ may have different
dimensions and aspect ratios.  Different random matrices can be used
for sketching, including Gaussian matrices, structured fast
transforms, and sparse random matrices. In this work, we focus on
dense random matrices including Gaussian random matrices, which offer 
optimal theoretical guarantees for constructing low-dimensional 
representations. We refer the reader to \cite{MarT20} for a 
comprehensive overview of these sketching methods and their associated 
properties.

The goal of this work is to design efficient parallel algorithms for computing $\M{B} = \M{A} \M{\Omega}$, with a particular focus on avoiding unnecessary communication of random matrices. Because each processor can redundantly generate its needed entries of the random matrix, the product can be computed without communicating the random matrix, decreasing the amount of communication required from that of an explicit matrix multiplication (see, e.g., \cite{ABGKR22}). When $\M{A}$ has more rows than the number of processors, the communication lower bound is zero. Our optimal algorithm attains this bound by partitioning the rows of $\M{A}$ and $\M{B}$. The remaining cases are nontrivial, and the paper provides a detailed analysis of all cases. Our analysis also applies to $\M{B}^T =\M{\Omega}^T\M{A}^T$ computation by transposition. When the number of processors is small, the communication lower bound for explicit matrix multiplication is approximately the size of the smallest matrix.

We then focus on an application of sketching to the Nystr\"{o}m approximation that computes a low rank approximation of a symmetric square matrix $\M{A}$ as $\widetilde{\M{A}} = (\M{A}\M{\Omega}) (\M{\Omega}^T\M{A}\M{\Omega})^\dag (\M{A}\M{\Omega})^T$ \cite{Nys930,WilS00,GM-2016}, for some random matrix $\M{\Omega}$. 
This representation involves first computing $\M{B}=\M{A}\M{\Omega}$ and then computing $\M{C} = \M{\Omega}^T \M{B}$.

We assume that processors perform an equal amount of computation and that there is one copy of the input data at the beginning and one copy of the output data at the end. The lower bounds do not make any assumptions regarding data distribution. Our lower bound approach uses a geometric inequality that relates computation to data to build a constrained optimization problem whose analytical solution yields communication lower bounds. Our algorithms select data distributions that minimize communication. We consider both a single matrix multiplication involving a random input as well as the entire Nystr\"{o}m computation.

We address both theoretical and practical aspects by establishing communication lower bounds and analyzing asymptotic algorithmic costs, as well as by implementing the most communication efficient algorithms and applying them to large data sets on 100s of nodes (100s of GPUs or 1000s of CPU cores) of a supercomputer.
We develop implementations using both Python and C++ and adapt them for both CPU-only and GPU platforms and explore the performance tradeoffs among the various configurations.
Our results show the benefits of GPUs for performing dense matrix multiplication to accelerate Nystr\"{o}m approximation and the effectiveness of direct communication among GPUs to scale to large problems, and we obtain low-rank approximations of symmetric matrices of dimension 50,000 in fractions of a second.
Our code is publicly available at \url{https://github.com/taufique71/parallel-nystrom}.

In this work, we focus on $\M{A}\M{\Omega}$ and $\M{\Omega}^T\M{A}\M{\Omega}$ computations for dense and random $\M{\Omega}$.
The main contributions are to
\begin{enumerate}[leftmargin=1.5em]
	\item establish communication lower bounds for the parallel computation of $\M{B} = \M{A}\M{\Omega}$ where $\M{\Omega}$ is a dense random matrix and present parallel algorithms whose communication costs are optimal in all ranges (see \cref{sec:randMatmul});
	\item establish communication lower bounds for the parallel computation of the sequence of computations $\M{B} = \M{A}\M{\Omega}$ followed by $\M{C}=\M{\Omega}^T\M{B}$ where $\M{\Omega}$ is a dense random matrix, and present parallel algorithms whose communication costs are close to the lower bounds (see \cref{sec:nystrom});
	\item implement the most efficient algorithms using both Python and C++ for CPUs and GPUs (code available at \url{https://github.com/taufique71/parallel-nystrom}) and benchmark the implementations to demonstrate the communication efficiency of the algorithms and their parallel scaling (see \cref{sec:experiments}).
\end{enumerate}

\section{Related Work}
\label{sec:related}
There have been numerous studies on establishing communication lower bounds for matrix multiplication. 
The first bounds were proposed by Hong and Kung~\cite{HK81} for several computations including matrix multiplication. 
Aggarwal et al.~\cite{ACS90} extended this work for the parallel model and established first memory-independent parallel communication lower bounds. 
Irony et al.~\cite{IRONY:JPDC04} reproduced the parallel matrix multiplication bounds using a geometric inequality. 
Al Daas et al.~\cite{ABGKR22} established tight parallel communication lower bounds for all ranges of matrix dimensions and numbers of processors.
Liang et al.~\cite{TMBD-2024} presented a theoretical analysis to determine the data transfer cost of their algorithm for multiplying a dense random matrix with a sparse matrix. 
Their algorithm performs less data movement than the classical GEMM bound.

Balabanov et al.~\cite{BalBGL23} proposed a parallel sketching using a variant of the subsampled randomized Hadamard transform that exploits the structure of the random matrix.
Talwalkar et al.~\cite{talwalkar2013large} presented a parallel implementation of sketching and Nystr\"om approximation applied to kernel matrices with very large datasets. Higgins et al.~\cite{HigBY25} presented a high performance GPU implementation of sketching using count sketch and its application to least squares problems. Chen et al.~\cite{CheNRSPK25} presented a parallel (multi GPU) implementation and a benchmark study of sparse sign sketching and its application to solve least squares problems.

Nystr\"{o}m approximation has been considered for a wide range of random matrices.
One of the most widely used techniques is based on random sampling. 
That is, the columns in $\M{\Omega}$ are randomly selected from the identity matrix. 
The resulting Nystr\"{o}m approximation is given as $\widetilde{\M{A}} = \M{A}(:, c) \M{A}(c,c)^\dag \M{A}(:,c)^T$, where $c$ is the randomly selected subset of the column indices. 
This approximation is very easy to compute as it only requires access to a subset of the matrix entries and does not involve any matrix multiplication. 
However, the approximation quality can be very poor if $c$ is not chosen carefully. Several sampling techniques relying on sketching have been investigated in the literature and applied to Nystr\"om approximation~\cite{cortinovis2026adaptive, cortinovis2025sublinear, bucci2025numerical, park2025accuracy}. 
Gittens and Mahoney~\cite{GM-2016} studied the performance quality and runtime of high-quality random sampling and random projection methods for Nystr\"{o}m approximation on various symmetric positive semi-definite matrices using MATLAB. They observed that both methods exhibit similar running times, while the relative performance quality depends on the specific parameter of interest, with no clear overall winner. Li et al.~\cite{LLW-2023} studied the optimal convergence rates of Nystr\"{o}m approximation under sampling-based selection in a distributed environment.

\section{Model and Preliminaries}
\label{sec:prelim}
Multiplication with a random matrix in linear algebra is performed to achieve dimension reduction. 
Thus the non-contracted dimension, $r$, of the random matrix is much smaller than the contracted dimension $n$. 
Therefore, we assume that $r<n$ throughout this paper. 
Additionally, we assume that all matrix multiplications are performed using a classical matrix multiplication algorithm.

\subsection{Parallel Computation Model}
\label{sec:prelim:model}

We first present the MPI model of computation and communication, which we use to establish communication lower bounds and analyze the costs of our algorithms in \cref{sec:randMatmul,sec:nystrom}.
In this model, computation is distributed across $P$ processors.
Each processor has a private local memory and is connected to all others via a fully connected network. 
Processors can only operate on data residing in their local memory and must communicate to access data stored in other processors.
We assume that each processor can send and receive at most one message at a time. 
Communication refers to send and receive operations that transfer data between the private local memory of a processor and the network. 
The cost of communication mainly depends on the total volume of data transferred (bandwidth cost) and the number of messages exchanged (latency cost). 
In our model, the communication cost of an algorithm is defined as the maximum communication cost among all paths in the corresponding computation dependency graph.
For large messages, bandwidth cost typically dominates latency, so we focus mainly on bandwidth cost.

Our algorithms in \cref{sec:randMatmul,sec:nystrom,sec:experiments} use All-Gather, Reduce-Scatter, and All-to-All collectives.  
The optimal latency and bandwidth costs of All-Gather and Reduce-Scatter collectives on $Q$ processors are $\log(Q)$ and $(1-\frac{1}{Q})W$, respectively, where $W$ denotes the words of data in each processor after All-Gather or before Reduce-Scatter collective~\cite{Thakur:CollectiveCommunications:2005,Chan:CollectiveCommunications:2007}.
For All-to-All, we assume a bandwidth cost of $W$, where $W$ denotes the amount of data a processor starts and ends with, and a latency cost of $Q-1$ for $Q$ processors~\cite{BCKUW97}.

\subsection{Fundamental Existing Results}
\label{sec:prelim:existing}
We now present existing results which we will use to prove our lower bound results (\cref{thm:randMatmulLB,thm:NystromLB}). 
The first lemma relates the size of a set with its projections onto lower dimensional subspaces. This is a generalization of the Loomis-Whitney inequality~\cite{LW49} that applies to arbitrary subsets of projections.
We use it in \cref{thm:hbl:random} to relate the size of a $3$-dimensional set to its $1$ and $2$-dimensional projections. 
In our context, the $3$-dimensional set contains iteration points corresponding to the computation and the $2$-dimensional set contains indices of array accesses.
The lemma is proved in \cite{Christ:EECS-2013-61} but we use the statement from~\cite[Lemma 4.1]{Ballard:MTTKRP:IPDPS18}.
\begin{lemma}
	\label{lem:hbl}
	Consider any positive integers $\ell$ and $m$ and any $m$ projections $\phi_j:\mathbb{Z}^\ell\rightarrow\mathbb{Z}^{\ell_j}$ ($\ell_j\leq \ell$), each of which extracts $\ell_j$ coordinates $S_j\subseteq [\ell]$ and forgets the $\ell-\ell_j$ others.
	Define
	$\mathcal{C} = \big\{\V{s} \in[0,1]^m:\M{\Delta}\cdot\V{s}\ge\V{1}\big\}\text,$
	where the $\ell\times m$ matrix $\M{\Delta}$ has entries
	$\M{\Delta}_{i,j} = 1 \text{ if } i\in S_j \text{ and } \M{\Delta}_{i,j} = 0 \text{ otherwise}\text.$
	If $[s_1\ \cdots \ s_m]^\Tra\in\mathcal{C}$, then for all $F\subseteq \mathbb{Z}^\ell$,
	$$ |F| \leq \prod_{j\in [m]}|\phi_j(F)|^{s_j}\text.$$
\end{lemma}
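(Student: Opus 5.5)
The plan is to use the entropy method: a fractional version of Shearer's lemma yields the inequality directly, with no induction on $\ell$. Throughout, I treat the case where $F$ is finite and nonempty. The empty case is trivial. If $F$ is infinite, then $F$ has infinitely many distinct values in some coordinate $i$. The constraint $\M{\Delta}\cdot\V{s}\ge\V{1}$ gives some $j$ with $i\in S_j$ and $s_j>0$, so $|\phi_j(F)|=\infty$ and the right-hand side is infinite under the usual convention.

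Let $X=(X_1,\dots,X_\ell)$ be uniformly distributed on $F$, so $H(X)=\log|F|$. For $S\subseteq[\ell]$, write $X_S$ for the subvector indexed by $S$. Then $X_{S_j}=\phi_j(X)$ takes at most $|\phi_j(F)|$ values, so $H(X_{S_j})\le\log|\phi_j(F)|$. It therefore suffices to prove the fractional Shearer inequality
\[
H(X)\;\le\;\sum_{j\in[m]} s_j\,H(X_{S_j}),
\]
and then exponentiate.

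The key step is to expand both sides with the chain rule in the fixed coordinate order $1,\dots,\ell$. Write $X_{<i}=(X_1,\dots,X_{i-1})$. The left side is $H(X)=\sum_{i=1}^{\ell}H(X_i\mid X_{<i})$. For each $j$, listing $S_j$ in increasing order gives
\[
H(X_{S_j})=\sum_{i\in S_j}H\bigl(X_i\mid X_{S_j\cap[i-1]}\bigr)\;\ge\;\sum_{i\in S_j}H(X_i\mid X_{<i}),
\]
because $S_j\cap[i-1]\subseteq[i-1]$ and conditioning on more variables cannot increase entropy.

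Multiply by $s_j\ge 0$ and sum over $j$, exchanging the order of summation:
\[
\sum_j s_j H(X_{S_j})\;\ge\;\sum_{i=1}^{\ell}\Bigl(\sum_j \M{\Delta}_{i,j}s_j\Bigr)H(X_i\mid X_{<i})\;\ge\;\sum_{i=1}^{\ell}H(X_i\mid X_{<i})=H(X).
\]
The last inequality uses $(\M{\Delta}\V{s})_i\ge 1$ together with $H(X_i\mid X_{<i})\ge 0$. The upper bound $s_j\le 1$ is not needed.

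The only delicate point is the monotonicity-of-conditioning step, which requires the same global coordinate order to be used for every projection. Once that order is fixed, the argument is bookkeeping. The statement is the discrete Brascamp--Lieb inequality for coordinate projections; an alternative route is induction on $\ell$ with H\"older's inequality, as in \cite{Christ:EECS-2013-61}, but the entropy proof above is shorter.
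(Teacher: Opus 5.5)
Your proof is correct. It takes a genuinely different route from the paper, which gives no argument of its own: it imports the lemma from \cite{Christ:EECS-2013-61}, using the statement of \cite[Lemma 4.1]{Ballard:MTTKRP:IPDPS18}. There the lemma is the coordinate-projection case of the discrete H\"older--Brascamp--Lieb inequality for arbitrary group homomorphisms $\phi_j$. Its hypothesis is the family of subgroup conditions $\mathrm{rank}(H)\le\sum_j s_j\,\mathrm{rank}(\phi_j(H))$ for all $H\le\mathbb{Z}^\ell$. For coordinate extractions these conditions collapse to $\M{\Delta}\V{s}\ge\V{1}$: the rank-one subgroups generated by unit vectors give exactly its rows, and those rows imply the remaining conditions. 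Your fractional-Shearer argument replaces that machinery with a few lines of elementary entropy calculus. It actually proves the stronger inequality $H(X)\le\sum_j s_jH(X_{S_j})$ for every finitely supported random vector $X$. It also shows that the cap $s_j\le 1$, inherited from the general theory, is superfluous here, and it handles infinite $F$, which the lemma's wording technically requires. What the imported theory buys is generality. It covers subscripts that are not coordinate extractions (e.g.\ $i+j$), where your argument does not apply as written, because it needs every $\phi_j(X)$ to be a subvector of $X$ so that one global coordinate order serves all $j$ at once. Note also the paper's only use of the lemma, in the proof of \cref{thm:hbl:random}. There $\V{s}=(1,1)$ and $\phi_i,\phi_{jk}$ extract complementary coordinate sets, so the needed inequality is just injectivity of $v\mapsto(\phi_i(v),\phi_{jk}(v))$ on $V$. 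Likewise, injectivity of $v\mapsto(\phi_{ij}(v),\phi_{jk}(v))$ gives that theorem directly. So the chain-rule bookkeeping in your proof does real work only for genuinely fractional $\V{s}$.
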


The following lemma provides a bound on the minimum number of elements a processor must access from each matrix to perform at least $1/P$th of the scalar multiplications required for a classical matrix multiplication computation. 
We use it in the proofs of \cref{thm:randMatmulLB,thm:NystromLB} to obtain additional constraints for our optimization.
\begin{lemma}[{\cite[Lemma 4]{ABGKR22}}]
	\label{lem:proj}
	Given a parallel matrix multiplication algorithm that multiplies an $n_1\times n_2$ matrix $\M{A}$ by an $n_2\times n_3$ matrix $\M{B}$ using $P$ processors, any processor that performs at least $1/P$th of the scalar multiplications must access at least $n_1n_2/P$ elements of $\M{A}$ and at least $n_2n_3/P$ elements of $\M{B}$ and also compute contributions to at least $n_2n_3/P$ elements of $\M{C}=\M{A}\cdot \M{B}$.
\end{lemma}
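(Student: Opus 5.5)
The plan is a direct counting argument on the fibers of the three projections of the iteration space. A classical algorithm for $\M{C}=\M{A}\cdot\M{B}$ performs exactly the scalar multiplications indexed by $(i,k,j)\in[n_1]\times[n_2]\times[n_3]$, where point $(i,k,j)$ multiplies $\M{A}(i,k)$ by $\M{B}(k,j)$ and contributes to $\M{C}(i,j)$. Fix a processor and let $F$ be the set of iteration points it performs. The hypothesis is
$$|F|\ge n_1n_2n_3/P.$$
Define the projections $\phi_A(i,k,j)=(i,k)$, $\phi_B(i,k,j)=(k,j)$ and $\phi_C(i,k,j)=(i,j)$. Then $|\phi_A(F)|$ is the number of entries of $\M{A}$ the processor must access, $|\phi_B(F)|$ the number of entries of $\M{B}$, and $|\phi_C(F)|$ the number of entries of $\M{C}$ to which it computes contributions.

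The key step is to bound the size of each fiber. For a fixed $(i,k)$, the preimage $\phi_A^{-1}(i,k)$ inside the full iteration space has exactly $n_3$ points, one for each $j$. So every accessed entry of $\M{A}$ accounts for at most $n_3$ points of $F$. Since $F$ is contained in the union of the fibers over $\phi_A(F)$, this gives
$$|F|\le n_3\,|\phi_A(F)|, \quad\text{hence}\quad |\phi_A(F)|\ge \frac{n_1n_2n_3}{P\,n_3}=\frac{n_1n_2}{P}.$$
Likewise, each fiber of $\phi_B$ has size $n_1$, so $|\phi_B(F)|\ge n_2n_3/P$. Each fiber of $\phi_C$ has size $n_2$, so $|\phi_C(F)|\ge n_1n_3/P$.

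This is the sharpest form of the trivial instance of \cref{lem:hbl}, where a single projection is taken with exponent $1$ on the forgotten coordinate's range. It can also be phrased directly as $|F|\le |\phi_A(F)|\cdot n_3$. No structure of the data distribution is needed, which matches the paper's assumption-free setting.

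There is no real obstacle here; the only point needing care is the bookkeeping of which dimension each projection forgets. For $\M{C}$ the forgotten index is the contraction index $k$, so the bound on output entries is $n_1n_3/P$. I would read the "$n_2n_3/P$" for $\M{C}$ in the statement as the corresponding $n_1n_3/P$ and prove it in that form.
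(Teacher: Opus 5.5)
Your fiber-counting argument is correct and is essentially the standard proof of the cited \cite[Lemma 4]{ABGKR22}, which the paper invokes without proof (it is equivalent to applying \cref{lem:hbl} to one 2D projection and the complementary 1D projection, then bounding the 1D projection by its dimension). You are also right that the $\M{C}$ bound must read $n_1n_3/P$: as printed, $n_2n_3/P$ fails whenever $n_2>n_1$ (already for $P=1$, since $\M{C}$ has only $n_1n_3$ entries), and the paper only ever uses the corrected form, e.g.\ $|\phi_{ik}(F)|\ge n_1r/P$ in \cref{thm:randMatmulLB} and $|\phi_{j'k'}(F')|\ge r^2/P$ in \cref{thm:NystromLB}.
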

The following definition and results allow us to solve the optimization problems used to obtain communication lower bounds in \cref{sec:randMatmul,sec:nystrom}.

\begin{definition}[{\cite[eq. (5.49)]{BV04}}]
	\label{def:KKT}
	Consider an optimization problem of the form 
	\begin{equation}
	\label{eq:optprob}
	\min_{\vc{x}} f(\vc{x}) \quad \text{ subject to } \quad \vc{g}(\vc{x}) \leq \vc{0}
	\end{equation} 
	where $f:\Real^d \rightarrow \Real$ and $\vc{g}:\Real^d\rightarrow \Real^c$ are both differentiable. 
	Define the dual variables $\vc{\mu}\in\mathbb{R}^c$, and let $\vc{J}_{\vc{g}}$ be the Jacobian of $\vc{g}$.
	The \emph{Karush-Kuhn-Tucker (KKT)} conditions of $(\vc{x},\vc{\mu})$ are as follows:
	\begin{itemize}
		\item \emph{Primal feasibility}: $\vc{g}(\vc{x}) \leq \vc{0}$;
		\item \emph{Dual feasibility}: $\vc{\mu} \geq 0$;
		\item \emph{Stationarity}: $\nabla f(\vc{x}) + \vc{\mu} \cdot \vc{J}_{\vc{g}}(\vc{x}) = \vc{0}$;
		\item \emph{Complementary slackness}: $\mu_i g_i(\vc{x})=0$ for all $i\in \{1,\dots,c\}$. 
	\end{itemize}
\end{definition}

\begin{lemma}[{\cite[Lemma 3]{ABGKR22}}]
	\label{lem:KKT}
	Consider an optimization problem of the form given in \cref{eq:optprob}.
	If $f$ is a convex function and each $g_i$ is a quasiconvex function, then the KKT conditions are sufficient for optimality.	
\end{lemma}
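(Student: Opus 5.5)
The plan is a direct first-order argument. Let $(\vc{x}^*,\vc{\mu})$ satisfy the four KKT conditions of \cref{def:KKT}, and let $\vc{y}$ be an arbitrary feasible point, so $\vc{g}(\vc{y})\le\vc{0}$. The goal is to show $f(\vc{y})\ge f(\vc{x}^*)$. Since $\vc{x}^*$ is itself feasible by primal feasibility, this proves that $\vc{x}^*$ is optimal. The argument combines the gradient inequality for the convex function $f$ with a first-order property of quasiconvex functions, glued together by stationarity and complementary slackness.

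\emph{Step 1: quasiconvexity in first-order form.} I claim that for each index $i$ with $\mu_i>0$ we have $\nabla g_i(\vc{x}^*)^\Tra(\vc{y}-\vc{x}^*)\le 0$.
By complementary slackness, $\mu_i>0$ forces $g_i(\vc{x}^*)=0$. Feasibility of $\vc{y}$ gives $g_i(\vc{y})\le 0=g_i(\vc{x}^*)$.
For $t\in(0,1]$, quasiconvexity gives
$$g_i(\vc{x}^*+t(\vc{y}-\vc{x}^*))\le\max\{g_i(\vc{y}),g_i(\vc{x}^*)\}=g_i(\vc{x}^*).$$
Divide the difference $g_i(\vc{x}^*+t(\vc{y}-\vc{x}^*))-g_i(\vc{x}^*)\le 0$ by $t$ and let $t\to 0^+$. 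Differentiability then shows that the directional derivative $\nabla g_i(\vc{x}^*)^\Tra(\vc{y}-\vc{x}^*)$ is nonpositive.

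\emph{Step 2: combine with stationarity.} Terms with $\mu_i=0$ contribute nothing. By dual feasibility ($\mu_i\ge 0$) and Step 1, we get $\sum_i \mu_i\nabla g_i(\vc{x}^*)^\Tra(\vc{y}-\vc{x}^*)\le 0$. Stationarity then gives
$$\nabla f(\vc{x}^*)^\Tra(\vc{y}-\vc{x}^*) = -\sum_i\mu_i\nabla g_i(\vc{x}^*)^\Tra(\vc{y}-\vc{x}^*)\ge 0.$$

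\emph{Step 3: convexity of $f$.} The first-order characterization of differentiable convex functions gives
$$f(\vc{y})\ge f(\vc{x}^*)+\nabla f(\vc{x}^*)^\Tra(\vc{y}-\vc{x}^*)\ge f(\vc{x}^*).$$
This completes the argument.

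The only delicate point is Step 1. Quasiconvexity alone does not yield a gradient inequality at an arbitrary point. It does so here because complementary slackness places $\vc{x}^*$ on the boundary $g_i=0$ for exactly the constraints that carry positive weight. This makes $g_i(\vc{x}^*)$ the maximum of $g_i$ along the segment from $\vc{x}^*$ to $\vc{y}$. The one-sided limit must be taken with $t\to 0^+$, which is enough because only the sign of the directional derivative is needed.
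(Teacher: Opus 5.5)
Your proof is correct. It is the standard sufficiency argument: complementary slackness makes every constraint with $\mu_i>0$ active, quasiconvexity then gives $\nabla g_i(\vc{x}^*)^\Tra(\vc{y}-\vc{x}^*)\le 0$ (the first-order characterization of quasiconvexity, cf.\ \cite{BV04}), and stationarity together with the gradient inequality for convex $f$ finishes the argument.

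The paper gives no proof of this lemma; it imports it from \cite[Lemma 3]{ABGKR22}. Your argument therefore supplies the justification rather than offering an alternative to one.

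Your route also has an advantage worth noting. Step 1 uses quasiconvexity of $g_i$ only along the segment joining two feasible points. So the argument still works when each $g_i$ is quasiconvex merely on a convex set containing the feasible region. This is exactly what the paper needs. The function $L-x_1x_2$ is quasiconvex only on the positive quadrant (\cref{lem:quasiconvex:xi}), not on all of $\Real^2$, so the lemma as literally stated does not cover it. However, the lower-bound constraints in \cref{lem:randMatmulOpt,lem:NystromOpt} keep every feasible point, and hence the whole segment $[\vc{x}^*,\vc{y}]$, inside the positive orthant.
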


\begin{lemma}[{\cite[Lemma 2.2]{BR20}}]
	\label{lem:quasiconvex:xi}
	The function $g_0(\vc{x}) = L - \prod_{i=1}^{d} x_{i}$, for some constant $L$, is quasiconvex in the positive quadrant.
\end{lemma}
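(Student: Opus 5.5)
The plan is to verify quasiconvexity directly from its definition. A function $g_0$ on a convex domain is quasiconvex if and only if every sublevel set $\{\vc{x} : g_0(\vc{x}) \le t\}$ is convex. The domain here is the open positive quadrant $\Real_{>0}^d$, which is itself convex. So I fix an arbitrary $t\in\Real$ and study
$$S_t = \Big\{\vc{x}\in\Real_{>0}^d : \prod_{i=1}^d x_i \ge c\Big\}, \qquad c = L - t,$$
and I split into two cases according to the sign of $c$.

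\emph{Case $c \le 0$.} Every point of the positive quadrant has a positive product, so $S_t$ is the whole quadrant. This set is convex.

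\emph{Case $c>0$.} Since $\log$ is strictly increasing on $(0,\infty)$, the constraint is equivalent to
$$h(\vc{x}) = \sum_{i=1}^d \log x_i \ge \log c .$$
Each $\log x_i$ is concave on $(0,\infty)$, and a sum of concave functions is concave, so $h$ is concave on the quadrant. A superlevel set of a concave function is convex. Concretely, if $h(\vc{x}),h(\vc{y})\ge \log c$ and $\theta\in[0,1]$, then
$$h(\theta\vc{x}+(1-\theta)\vc{y}) \ge \theta h(\vc{x}) + (1-\theta) h(\vc{y}) \ge \log c .$$
Hence $S_t$ is convex in this case too.

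Since every sublevel set is convex, $g_0$ is quasiconvex on the positive quadrant. There is no real obstacle in this argument. The only points needing care are that the domain is restricted to positive coordinates, which is what makes the logarithm available and the set convex (the product constraint is not convex on all of $\Real^d$), and that the degenerate case $c\le 0$ must be handled separately. As an alternative route, one could use the concavity of the geometric mean $(\prod_i x_i)^{1/d}$ and take $d$-th roots instead of logarithms.
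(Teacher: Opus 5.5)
Your proof is correct. The paper gives no argument of its own for this lemma: it quotes it directly from \cite[Lemma 2.2]{BR20}, so there is no in-paper proof to compare against. Your sublevel-set argument is a complete, self-contained substitute for that citation. You rewrite $\{L-\prod_i x_i\le t\}$ as a superlevel set of the concave function $\sum_i\log x_i$, and you handle the case $L-t\le 0$ separately, where the set is trivially the whole quadrant.

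Your version makes explicit where positivity is used. It also works under both readings of ``positive quadrant.'' If the closed orthant is meant, then for $c>0$ the set $\{\prod_i x_i\ge c\}$ automatically avoids the boundary, and for $c\le 0$ it is the whole closed orthant, so nothing changes.

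Your argument also extends verbatim to a product over any subset of the coordinates. In that case the sublevel set is a convex set in those coordinates times a product of open half-lines in the others, hence convex. This is the form actually needed in \cref{lem:NystromOpt}, where $x_1x_2$ and $x_2x_3$ appear as constraints on $\vc{x}\in\Real^3$. The paper passes over this point when it invokes the lemma there.
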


\subsection{Fundamental New Results}
\label{sec:prelim:new}

We now present a geometric inequality that we will use in \cref{sec:randMatmulLB,sec:nystrom:lb} to derive lower bounds for computations involving random matrices. 
\begin{theorem}
	\label{thm:hbl:random}
	Let $V$ be a finite set of points in $\mathbb{Z}^3$. Let $\phi_{ij}(V)$ be the projection of $V$ on the $ij$-plane, i.e., all points $(i,j)$ such that there exists a $k$ so that $(i,j,k) \in V$. Define $\phi_{jk}(V)$ and $\phi_{ki}(V)$ similarly. 
	Then $|V| \leq |\phi_{ij}(V)| \cdot |\phi_{jk}(V)|,$  $|V| \leq |\phi_{jk}(V)| \cdot |\phi_{ki}(V)|\text,$ and $|V| \leq |\phi_{ki}(V)| \cdot |\phi_{ij}(V)|\text.$
\end{theorem}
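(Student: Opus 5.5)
The plan is to derive each of the three inequalities as an instance of \cref{lem:hbl}, with $\ell=3$ and $m=2$. A direct counting argument gives the same result and serves as a sanity check.

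\textbf{Setting up \cref{lem:hbl}.} For the first inequality, $|V|\le|\phi_{ij}(V)|\cdot|\phi_{jk}(V)|$, I take the two projections $\phi_{ij}$ and $\phi_{jk}$. These extract the coordinate sets $S_1=\{i,j\}$ and $S_2=\{j,k\}$. The corresponding $3\times 2$ matrix $\M{\Delta}$ has rows indexed by $i,j,k$, namely $(1,0)$, $(1,1)$, $(0,1)$. I then check that $\V{s}=[1\ 1]^\Tra$ lies in $\mathcal{C}$. Indeed $\V{s}\in[0,1]^2$, and $\M{\Delta}\V{s}=[1\ 2\ 1]^\Tra\ge\V{1}$. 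Applying \cref{lem:hbl} with $F=V$ yields exactly $|V|\le|\phi_{ij}(V)|^1|\phi_{jk}(V)|^1$.

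\textbf{The other two inequalities.} These follow by the same argument with the coordinate sets $\{j,k\},\{k,i\}$ and $\{k,i\},\{i,j\}$. In every case the union of the two coordinate sets is all of $\{i,j,k\}$, so each row of $\M{\Delta}$ has at least one $1$ and $\V{s}=\V{1}$ remains feasible. Alternatively, I can appeal to the symmetry of the statement under permuting coordinates.

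\textbf{Elementary check.} The same bound can be seen directly. The map $V\to\phi_{ij}(V)\times\phi_{jk}(V)$ sending $(i,j,k)\mapsto((i,j),(j,k))$ is injective, since the point $(i,j,k)$ is recovered from the pair. This gives $|V|\le|\phi_{ij}(V)|\cdot|\phi_{jk}(V)|$ without invoking \cref{lem:hbl}, and it shows the inequality only needs the two projections to jointly cover all three coordinates.

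There is no real obstacle here. The only point needing care is verifying the constraint $\M{\Delta}\V{s}\ge\V{1}$, in particular that the shared coordinate is not required to be counted with total weight exactly one. I will also note why this weaker bound is useful: it complements Loomis--Whitney when one projection, such as the random matrix $\M{\Omega}$, is free to access.
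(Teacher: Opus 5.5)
Your proof is correct, and it takes a slightly different route from the paper's.

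The paper does not apply \cref{lem:hbl} to the two planar projections directly. It first introduces the one-dimensional projections $\phi_i,\phi_j,\phi_k$. It then applies the lemma with the disjoint coordinate sets $\{i\}$ and $\{j,k\}$, so that $\M{\Delta}$ has rows $(1,0),(0,1),(0,1)$ and $\V{s}=[1\ 1]^\Tra$, obtaining $|V|\le|\phi_i(V)|\cdot|\phi_{jk}(V)|$. Finally it enlarges the line projection to a plane projection using $|\phi_i(V)|\le|\phi_{ij}(V)|$.

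You instead feed the overlapping pair $\{i,j\},\{j,k\}$ straight into the lemma. You correctly note that the constraint $\M{\Delta}\V{s}\ge\V{1}$ allows the shared coordinate to carry weight $2$, and this removes the monotonicity step.

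The paper's detour produces a formally stronger intermediate bound, with a line projection in place of one plane projection, although the theorem neither states nor uses it. Your route is one step shorter. Your injection $(i,j,k)\mapsto((i,j),(j,k))$ also shows that the result is pure counting and needs no appeal to \cref{lem:hbl}. The same injection idea, with codomain $\phi_i(V)\times\phi_{jk}(V)$, gives the paper's intermediate inequality just as directly.
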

\begin{proof}
	To begin, define $\phi_i(V)$ to be the projection of $V$ on the $i$-axis, i.e., all points $i$ such that there exists a $(j,k)$ so that $(i,j,k) \in V$, and $\phi_j(V)$ and $\phi_k(V)$ similarly.
	We recall from \cref{lem:hbl} that $\Delta_{i,j}=1$ if index $i$ is present in projection $j$ and $\Delta_{i,j}=0$ otherwise.
	We see that $|V| \leq |\phi_i(V)| \cdot |\phi_{jk}(V)|$ 
	by setting
	$\Delta = \begin{bmatrix} 1 & 0 \\ 0 & 1 \\ 0 & 1 \end{bmatrix}$
	and $\V{S} = \begin{bmatrix} 1 & 1 \end{bmatrix}$, and then applying \cref{lem:hbl}. The first and second columns of $\Delta$ correspond to $\phi_i(V)$ and $\phi_{jk}(V)$ projections, respectively.
	Likewise, we obtain $|V| \leq |\phi_j(V)| \cdot |\phi_{ki}(V)|\text, $ and $|V| \leq |\phi_k(V)| \cdot |\phi_{ij}(V)|$.
	
	To prove the result, note that $i\in \phi_i(V)$ implies that there is a $(j,k)$ such that $(i,j,k)\in V$,  this also implies that $(i,j)\in \phi_{ij}(V)$.
	Thus $|\phi_i(V)|\leq |\phi_{ij}(V)|$ so the first inequality holds.
	A similar argument applies to the remaining inequalities.
\end{proof}

\section{Lower Bounds and Algorithm for Matrix Multiplication with a Random Matrix}
\label{sec:randMatmul}
In this section we consider the computation $\M{B} = \M{A}\M{\Omega}$ where $\M{A}$ is a matrix of dimensions $n_1\times n_2$ and $\M{\Omega}$ is a random matrix of dimensions $n_2\times r$ where $ r < n_2$.
The iteration space of the computation contains $n_1n_2r$ iteration points and a scalar multiplication is performed in each iteration point.
As $\M{\Omega}$ is a random matrix it does not need to be communicated but can instead be generated by any processor that needs it.
Thus we expect that an algorithm to compute this should require less communication than one that computes a classical matrix multiplication between non-random matrices.

\subsection{Communication Lower Bound}
\label{sec:randMatmulLB}
We begin by presenting an abstract optimization problem which we will use to determine the minimum amount of data required for a processor to perform its assigned scalar multiplications. 
In our lower bound formulation, $x_1$ and $x_2$ correspond to the number of elements of non-random matrices accessed by a processor, and we seek to minimize their sum to establish the bound.

\begin{lemma}
	\label{lem:randMatmulOpt}
	Consider the following optimization problem 
	$$\min_{\V{x}\in\mathbb{R}^2} x_1 + x_2$$
	such that
	$x_1x_2 \geq n_1n_2r/P,$ $n_1n_2/P\leq x_1,$ $n_1r/P\leq x_2,$
	where $P,n_1,n_2$, and $r$ are all positive integers greater than or equal to 1, and $n_2> r.$
	The constraints induce three cases for the optimal solution $\V{x}^*$ :
	\begin{itemize}
		\item If $P\leq n_1$ then $x_1^* = \frac{n_1n_2}{P}, x_2^* = \frac{n_1r}{P};$
		\item If $n_1 < P \leq n_1n_2/r$ then $x_1^* = \frac{n_1n_2}{P}, x_2^*=r;$
		\item If $n_1n_2/r < P$ then $x_1^*=x_2^*=\left(\frac{n_1n_2r}{P}\right)^{1/2}$.
	\end{itemize}
	This can be visualized as follows:
	\begin{center}
		\begin{tikzpicture}[scale=0.5, every node/.style={transform shape}]
		\draw [->, thick] (-0.1,0) -- (15,0) node [above,scale=2] {Increasing $P$};s
		\draw (0, 0.1) -- node [below, pastelred, scale=2]{$1$}(0,-0.1);
		\draw (5, 0.1) -- node [below, pastelred, scale=2]{$n_1$}(5,-0.1);
		\draw (10, 0.1) -- node [below, pastelred, scale=2] {$\frac{n_1n_2}{r}$}(10,-0.1);
		
		\node[align=left,below,scale=1.5] at (2.5, -0.4) {$x_1^*=\frac{n_1n_2}{P}$\\ $x_2^*=\frac{n_1r}{P}$};
		\node[align=left,below,scale=1.5] at (7.5, -0.6) {$x_1^*=\frac{n_1n_2}{P}$\\ $x_2^*=r$};
		\node[align=center,below,scale=1.5] at (13.25, -0.8) {$x_1^*=x_2^*= \left(\frac{n_1n_2r}{P}\right)^{1/2}$};	
		\end{tikzpicture}
		\end{center}
\end{lemma}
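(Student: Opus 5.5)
We will invoke \cref{lem:KKT}. The objective $x_1+x_2$ is linear, hence convex. Write the constraints in the form $\vc{g}(\vc{x})\le \vc{0}$:
\[
g_0=\tfrac{n_1n_2r}{P}-x_1x_2,\qquad g_1=\tfrac{n_1n_2}{P}-x_1,\qquad g_2=\tfrac{n_1r}{P}-x_2 .
\]
The constraint $g_0$ is quasiconvex on the positive quadrant by \cref{lem:quasiconvex:xi}, and $g_1,g_2$ are affine. The feasible region lies in the positive quadrant because $g_1,g_2\le 0$ force $x_1,x_2>0$. So for each case it suffices to exhibit $(\vc{x}^*,\vc{\mu})$ satisfying the four KKT conditions of \cref{def:KKT}. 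Stationarity reads
\[
1-\mu_0x_2-\mu_1=0,\qquad 1-\mu_0x_1-\mu_2=0 .
\]

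\textbf{Case $P\le n_1$.} Take $\vc{x}^*=(n_1n_2/P,\; n_1r/P)$, so $g_1=g_2=0$. Primal feasibility of $g_0$ requires $x_1^*x_2^*=n_1^2n_2r/P^2\ge n_1n_2r/P$, which is exactly $P\le n_1$. Set $\mu_0=0$ and $\mu_1=\mu_2=1$. Complementary slackness holds whether or not $g_0$ is tight.

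\textbf{Case $n_1<P\le n_1n_2/r$.} Take $x_1^*=n_1n_2/P$ and $x_2^*=r$. Then $g_0=g_1=0$. Feasibility of $g_2$ means $r\ge n_1r/P$, i.e.\ $P\ge n_1$, which holds. Since $g_2$ is slack, set $\mu_2=0$. The second stationarity equation gives $\mu_0=1/x_1^*=P/(n_1n_2)$. Then $\mu_1=1-\mu_0 r=1-Pr/(n_1n_2)$, which is $\ge 0$ exactly when $P\le n_1n_2/r$.

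\textbf{Case $P>n_1n_2/r$.} Take $x_1^*=x_2^*=(n_1n_2r/P)^{1/2}$ with $\mu_1=\mu_2=0$ and $\mu_0=1/x_1^*$. Stationarity is immediate. For primal feasibility we need:
\begin{itemize}
\item $(n_1n_2r/P)^{1/2}\ge n_1n_2/P$, which is equivalent to $P\ge n_1n_2/r$ and holds;
\item $(n_1n_2r/P)^{1/2}\ge n_1r/P$, which is equivalent to $P\ge n_1r/n_2$. This is implied by $P>n_1n_2/r$ since $n_2>r$.
\end{itemize}

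The only delicate point is matching each case's inequality to the sign of a multiplier or to the feasibility of a slack constraint, and using $n_2>r$ in the last case. Each candidate satisfies the KKT conditions, so \cref{lem:KKT} gives optimality. The three ranges of $P$ partition the positive integers, which yields the stated case analysis.
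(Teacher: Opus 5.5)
Your proof is correct and follows the paper's argument: KKT sufficiency via \cref{lem:KKT} and \cref{lem:quasiconvex:xi}, with the same primal--dual pairs $(\vc{x}^*,\vc{\mu}^*)$ in all three cases. You are slightly more careful than the paper in two places: you explicitly verify $x_2^*\ge n_1r/P$ in the third case using $n_2>r$, and you justify restricting attention to the positive quadrant.
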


\begin{proof}
	By \cref{lem:KKT}, we can establish the optimality of the solution for every case by showing that the KKT conditions specified in \cref{def:KKT} are satisfied because the objective function and all but the first constraint are affine functions, which are convex (and quasiconvex), and the first constraint is quasiconvex in the positive quadrant by \cref{lem:quasiconvex:xi}.
	
	To match standard notation, let $f(\V{x}) = x_1+x_2$ and
	$$g(\V{x}) = \begin{bmatrix} \frac{n_1n_2r}{P}-x_1x_2 \\ \frac{n_1n_2}{P}-x_1 \\ \frac{n_1r}{P}-x_2\end{bmatrix}.$$
	Then $\nabla f(\V{x}) = \begin{bmatrix}1 & 1\end{bmatrix}$ and
	$$J_g(\V{x}) = \begin{bmatrix} 
	-x_2 & -x_1 \\
	-1 & 0 \\
	0 & -1
	\end{bmatrix}.
	$$
	
	In each of the three cases, we will set $\V{x}^*$ and $\V{\mu}^*$ then verify that the KKT conditions hold.
	\paragraph{Case 1 ($P\leq n_1$)}
	Set $\V{x}^* = \begin{bmatrix} n_1n_2/P & n_1r/P \end{bmatrix}$ and $\V{\mu}^* = \begin{bmatrix} 0  & 1 & 1  \end{bmatrix}$. 
	Both primal and dual feasibility are immediate. 
	As $\V{\mu}^* J_g(\V{x}^*) \linebreak[4] = \begin{bmatrix} -1 & -1\end{bmatrix},$ the stationarity condition holds.
	Complementary slackness is satisfied as all but the first constraint are tight for $\V{x}^*$ and the dual variable $\mu_1^*$ is zero.

	\paragraph{Case 2 ($n_1 < P \leq n_1n_2/r$)}
	Set $\V{x}^* = \begin{bmatrix} n_1n_2/P & r \end{bmatrix}$ and $\V{\mu}^* = \begin{bmatrix} P/(n_1n_2) & 1-(rP)/(n_1n_2) & 0\end{bmatrix}$. 
	The primal feasibility of $x_2$ is satisfied because $n_1 \leq P$ implies $ n_1r/P \leq r$. 
	The other constraints are clearly satisfied. 
	Dual feasibility requires that $1-(rP)/(n_1n_2) \geq 0$, which is satisfied from the condition of the case, $P \leq n_1n_2/r$.
	Stationarity can be directly verified. 
	Complementary slackness follows because the only constraint which is not tight is associated with a zero dual variable.
	
	\paragraph{Case 3 ($ n_1n_2/r < P$)}
	Set $\V{x}^* = \begin{bmatrix} (n_1n_2r/P)^{1/2} & (n_1n_2r/P)^{1/2} \end{bmatrix}$ and $\V{\mu}^* = \begin{bmatrix} (P/(n_1n_2r))^{1/2} & 0 & 0\end{bmatrix}$. 
	Primal feasibility is satisfied because $n_1n_2/r < P$ implies that $n_1n_2/P \leq (n_1n_2r/P)^{1/2}$.
	Dual feasibility is immediate and stationarity is directly verified. 
	Complementary slackness is satisfied because the first constraint is tight and only the first dual variable is not zero.
\end{proof}

\begin{theorem}
	\label{thm:randMatmulLB}
	Consider the computation, $\M{B}=\M{A}\cdot\M{\Omega}$, where $\M{A}$ has dimensions $n_1\times n_2$ and $\M{\Omega}$ is a random matrix of dimensions $n_2\times r$ where $n_2 > r$. 
	Any parallel algorithm using $P$ processors that load balances the computation and begins with one copy of the input matrix $\M{A}$ and ends with one copy of the output matrix $\M{B}$ must communicate at least $W$ words of data where 
	$$W= \begin{cases}
	\quad 0 &\text{ if } \quad 1\leq P \leq n_1\\
	\quad r-\frac{n_1r}{P} &\text{ if } \quad n_1 < P \leq \frac{n_1n_2}{r}\\
	\quad 2\left(\frac{n_1n_2r}{P}\right)^{1/2} - \frac{n_1n_2+n_1r}{P} &\text{ if } \quad \frac{n_1n_2}{r} < P\\
	\end{cases}.$$
\end{theorem}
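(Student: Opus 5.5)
We follow the standard approach of \cite{ABGKR22}: fix a processor that performs at least $1/P$th of the $n_1n_2r$ scalar multiplications. Such a processor exists by load balance. We then lower bound the amount of data from the non-random matrices $\M{A}$ and $\M{B}$ it must access, and subtract what it can own initially or finally.

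Let $V\subseteq \mathbb{Z}^3$ be the set of iteration points $(i,j,k)$ it performs, indexing $\M{A}(i,j)$, $\M{\Omega}(j,k)$ and $\M{B}(i,k)$, so that $|V|\geq n_1n_2r/P$. Set $x_1=|\phi_{ij}(V)|$, the entries of $\M{A}$ accessed, and $x_2=|\phi_{ik}(V)|$, the entries of $\M{B}$ to which it contributes. Since $\M{\Omega}$ is generated locally, the projection onto $\phi_{jk}$ costs nothing and must be excluded. \Cref{thm:hbl:random} gives exactly the needed inequality, $x_1x_2\geq |V|\geq n_1n_2r/P$, which uses only the $\M{A}$ and $\M{B}$ projections. \Cref{lem:proj}, applied with $\M{A}$ as the first factor and with output $\M{B}$ of size $n_1\times r$, gives $x_1\geq n_1n_2/P$ and $x_2\geq n_1r/P$. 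For the output we need the statement in the form of at least $1/P$th of the output size $n_1r$. These are exactly the constraints of \cref{lem:randMatmulOpt}, so $x_1+x_2\geq x_1^*+x_2^*$ in each of the three regimes.

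Next we convert access into communication. Since there is one copy of $\M{A}$ and one copy of $\M{B}$, we may assume without loss of generality that the chosen processor owns at most $n_1n_2/P$ entries of $\M{A}$ at the start and at most $n_1r/P$ entries of $\M{B}$ at the end. We take the processor that owns the least data, combined with the load-balanced work, by the same averaging argument as in \cite{ABGKR22}. The main obstacle is handling this "some processor satisfies both" issue carefully. Following \cite{ABGKR22}, we argue that any contribution to an entry of $\M{B}$ not owned at the end must be sent, and any accessed entry of $\M{A}$ not owned initially must be received. Hence
\[
W \geq x_1 + x_2 - \frac{n_1n_2}{P} - \frac{n_1r}{P}.
\]

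Substituting the optimum from \cref{lem:randMatmulOpt} gives the three cases. For $P\leq n_1$ the bound is $0$. For $n_1<P\leq n_1n_2/r$ it is $r-n_1r/P$. For $P>n_1n_2/r$ it is $2(n_1n_2r/P)^{1/2}-(n_1n_2+n_1r)/P$, which is the claimed $W$.
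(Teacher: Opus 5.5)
Your proposal is correct and follows essentially the same route as the paper: pick a processor, bound its $\M{A}$ and $\M{B}$ projections with \cref{thm:hbl:random} and \cref{lem:proj}, solve via \cref{lem:randMatmulOpt}, and subtract the data it owns. One correction: averaging does not let you assume the processor owns at most $n_1n_2/P$ entries of $\M{A}$ \emph{and} at most $n_1r/P$ entries of $\M{B}$ separately, but your final inequality only needs the sum, and the paper simply picks a processor whose initial-plus-final owned data is at most $(n_1n_2+n_1r)/P$ (under load balance every processor performs exactly $n_1n_2r/P$ multiplications, so the work condition and the data condition do not conflict).
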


\begin{proof}
	The total number of iteration points in the computation is $n_1n_2r$.
	As the algorithm load balances the computation, each processor performs the scalar multiplications associated with $n_1n_2r/P$ iteration points. 
	We focus on a processor for which the sum of the sizes of its input at the start and output at the end of the computation does not exceed $(n_1n_2 + n_1r)/P$ words. 
	Such a processor must exist as otherwise the algorithm would either start with more than one copy of $A$ or end with more than one copy of $B$. 	

	Let $F$ be the set iteration points corresponding to the scalar multiplications performed by this processor. 
	An element $(i,j,k)$ of $F$ corresponds to the multiplication of $\M{A}(i,j)$ with $\M{\Omega}(j,k)$, which contributes to the result $\M{B}(i,k)$. 
	We consider the projections of the iteration points in $F$ onto the indices of the arrays.
	Let $\phi_{ij}(F)$ and  $\phi_{ik}(F)$ denote the projections of $F$ onto the input matrix $\M{A}$ and the output matrix $\M{B}$ respectively. 
	Then \cref{thm:hbl:random} implies that  $|\phi_{ij}(F)||\phi_{ik}(F)| \geq |F| = n_1n_2r/P$. 	
	By \cref{lem:proj} we know that $|\phi_{ik}(F)| \geq n_1r/P$ and $|\phi_{ij}(F)|\geq n_1n_2/P$.  
	
	To minimize the communication we need to minimize the number of array elements accessed by this processor.
	As the processor must access all elements in $\phi_{ij}(F)$ and $\phi_{ik}(F)$ in order to perform all the scalar multiplications corresponding to the iteration points in $F$, we want to minimize $|\phi_{ij}(F)| + |\phi_{ik}(F)|$ subject to the above constraints. 
	By \cref{lem:randMatmulOpt}, we obtain the minimum number of elements that must be accessed by this processor. 
	Subtracting the data the processor can own from the result proves the lower bound. 
\end{proof}

Note that by replacing all the projections on matrices in the above proof with the projections on their transposes, we can show that the above lower bound is also valid for the computation, $\M{C}^T = \M{\Omega}^T\M{A}^T$. 
Hence the above theorem also allows one to obtain communication lower bounds for matrix multiplications where the random matrices are the first operands.

In most practical cases, the number of rows of $A$ is larger than $P$; the bound indicates that no communication is required. 
In general, the obtained bounds are better (smaller) than the general matrix multiplication bounds~\cite{ABGKR22}. This motivates us to design algorithms that achieve these bounds.
For $n_1 \geq n_2 > r$, the general matrix multiplication bound ($LB_{GEMM}$) is visualized as follows.
\begin{center}
	\begin{tikzpicture}[scale=0.5, every node/.style={transform shape}]
	\draw [->, thick] (-0.1,0) -- (15.5,0) node [above,scale=1.5] {Increasing $P$};s
	\draw (0, 0.1) -- node [below, pastelred, scale=2]{$1$}(0,-0.1);
	\draw (5, 0.1) -- node [below, pastelred, scale=2]{$\frac{n_1}{n_2}$}(5,-0.1);
	\draw (10, 0.1) -- node [below, pastelred, scale=2] {$\frac{n_1n_2}{r^2}$}(10,-0.1);
	
	\node[align=left,below,scale=1.25] at (2.5, -0.5) {$LB_{GEMM}=n_2r-\frac{n_2r}{P} \quad$};
	\node[align=left,below,scale=1.25] at (7.5, -0.8) {$LB_{GEMM}=2\left(\frac{n_1n_2r^2}{P}\right)^{1/2}$\\ $\qquad\qquad\qquad-\frac{n_1r+n_2r}{P}$};
	\node[align=center,below,scale=1.25] at (13.25, -0.8) {$LB_{GEMM}= 3\left(\frac{n_1n_2r}{P}\right)^{2/3}\qquad$ \\$\qquad\qquad\quad-\frac{n_1r+n_2r+n_1n_2}{P}$};	
	\end{tikzpicture}
\end{center}
For $n_2 > n_1 \geq r$, the same visualization applies upon interchanging $n_1$ and $n_2$.

\subsection{Optimal Parallel Algorithm}
\label{sec:randMatmulAlg}
In this section, we present an optimal algorithm to show that the lower bound of \cref{thm:randMatmulLB} is tight. 
We organize $P$ processors into a $p_1\times p_2\times p_3$ processor grid and assign the computation to processors according to their coordinates, represented as $(i,j,k)$ which give their locations in the grid. 
The algorithm performs two collective operations, one All-Gather and one Reduce-Scatter. 
Each processor receives the portion of the input matrix $\M{A}$ it needs to perform its computation through an All-Gather collective operation. 
After that, the processor generates the required portion of the random matrix and performs its local computation. 
The result of each local computation must be summed with all other contributions to the same output matrix entries from other processors, which is achieved by a Reduce-Scatter collective operation.

As we require that there is one copy of data in the system at the start and end of the computation, we distribute the input (resp. output) matrix $\M{A}$ (resp. $\M{B}$) evenly among all processors at the start (resp. end). 
For simplicity of explanation, we assume that $p_1, p_2, p_3$ evenly divide $n_1, n_2, r$, respectively. 
We use the notation $\M{A}_{ij}$ to denote the submatrix of $\M{A}$ such that 
$$\M{A}_{ij} = \M{A}\left((i-1)\cdot \frac{n_1}{p_1}+1:i\cdot \frac{n_1}{p_1}, (j-1)\cdot \frac{n_2}{p_2}+1:j\cdot \frac{n_2}{p_2}\right)\text.$$
We assume that $\M{A}_{ij}$ is distributed evenly among $\Pi_{ij*}$ processors at the beginning of the computation and denote the part of $\M{A}_{ij}$ owned by the processor $(i,j,k)$ by $\M{A}_{ij}^{(k)}$. 
We define $\M{B}_{ik}$ similarly to $\M{A}_{ij}$. 
At the end, $\M{B}_{ik}$ is distributed evenly among  $\Pi_{i*k}$ processors and we denote the portion of $\M{B}_{ik}$ owned by processor $(i,j,k)$ with $\M{B}_{ik}^{(j)}$.

\begin{algorithm}[tb]
	\caption{\label{alg:3dRandMatmul}Matrix Multiplication with a Random Matrix}
	\begin{algorithmic}[1]		
		\Require $\Pi$ is a $p_1\times p_2 \times p_3$  grid of processors, $|\Pi| = P$.
		\Require $\M{A}$ is evenly divided into a $p_1\times p_2$ grid of rectangular blocks of dimension $n_1/p_1\times n_2/p_2$, and each block $\M{A}_{ij}$ is evenly divided across a set of $p_3$ processors. $\M{A}_{ij}^{(k)}$ is owned by processor coordinate $(i,j,k)$.
		\Ensure $\M{B} = \M{A}\M{\Omega}$, $\M{B}$ is evenly divided across a $p_1\times p_3$ grid of blocks, and each block $\M{B}_{ik}$ is evenly divided across a set of $p_2$ processors. $\M{B}_{ik}^{(j)}$ is owned by processor coordinate $(i,j,k)$.
		\Function{$\M{B}_{ik}^{(j)}=$ RandMatMul}{$\M{A}_{ij}^{(k)},\Pi$}
		\State $(i,j,k) = \Call{MyCoordinate}{\Pi}$
		\State //Gather the required data of input matrix $\M{A}$
		\State $\M{A}_{ij}$ = \Call{All-Gather}{$\M{A}_{ij}^{(k)}$, $\Pi_{ij*}$}\label{alg:3dmatmul:line:allGatherMatrixA}
		\State //Generate the required random submatrix
		\State $\M{\Omega}_{jk}$ =  \Call{GenRandom}{$n_2/p_2,r/p_3$}
		\State //Perform local matrix multiplication
		\State $\bar{\M{B}_{ik}}=\M{A}_{ij} \cdot \M{\Omega}_{jk}$\label{alg:3dmatmul:line:localComputation}
		\State //Sum results to compute $\M{B}_{ik}^{(j)}$
		\State $\M{B}_{ik}^{(j)} = \Call{Reduce-Scatter}{\bar{\M{B}_{ik}},\Pi_{i*k}}$\label{alg:3dmatmul:line:ReduceScatterMatrixB}
		\EndFunction
	\end{algorithmic}
\end{algorithm}

\subsubsection{Cost Analysis}
\label{sec:randMatmulAlg:cost}
We now analyze computation and communication costs of \cref{alg:3dRandMatmul}. 
Each processor performs $(n_1/p_1)(n_2/p_2)(r/p_3)\allowbreak =n_1n_2r/P$ scalar multiplications in \cref{alg:3dmatmul:line:localComputation}. 

Communication occurs only in the All-Gather and Reduce-Scatter collectives in \cref{alg:3dmatmul:line:allGatherMatrixA,alg:3dmatmul:line:ReduceScatterMatrixB}, respectively.
The bandwidth costs of \cref{alg:3dmatmul:line:allGatherMatrixA,alg:3dmatmul:line:ReduceScatterMatrixB} in \cref{alg:3dRandMatmul} are $\left(1-\frac{1}{p_3}\right)\frac{n_1n_2}{p_1p_2}$ and $\left(1-\frac{1}{p_2}\right)\frac{n_1r}{p_1p_3}$, respectively. 
Thus the overall bandwidth cost of \cref{alg:3dRandMatmul} is $\left(1-\frac{1}{p_3}\right)\frac{n_1n_2}{p_1p_2} + \left(1-\frac{1}{p_2}\right)\frac{n_1r}{p_1p_3} = \frac{n_1n_2}{p_1p_2} + \frac{n_1r}{p_1p_3} - \frac{n_1n_2+n_1r}{P}.$ 
The latency costs of \cref{alg:3dmatmul:line:allGatherMatrixA,alg:3dmatmul:line:ReduceScatterMatrixB} are $\log(p_3)$ and $\log(p_2)$, respectively. 
Thus the overall latency cost of the algorithm is $\log(p_3) + \log(p_2)$.

\subsection{Optimal Processor Grid Selection}
\label{sec:randMatmulAlg:optprocgrid}
To minimize the communication costs of our algorithm, we select $p_1, p_2$ and $p_3$ based on the terms of the communication lower bound (\cref{thm:randMatmulLB}). 
As the lower bound has three cases, we discuss for each separately.
In all cases, we assume that the numerator is divisible by the denominator for each division expression. 

\paragraph{Case 1 ($P\leq n_1$)}
We set $p_1=P,$ and $p_2= p_3 = 1$. 
Thus the bandwidth cost is $(n_1n_2+n_1r)/P-(n_1n_2+n_1r)/P = 0$, so no communication is required matching the lower bound.

\paragraph{Case 2 ($n_1 < P \leq n_1n_2/r$)}
We set $p_1 = n_1,$ $p_2 = P/n_1,$ and $p_3 = 1.$ 
The bandwidth cost is $n_1n_2/P + r - (n_1n_2+n_1r)/P$, which matches the lower bound.

\paragraph{Case 3 ($n_1n_2/r <  P$)}
We set $p_1 = n_1,$  $p_2 = (Pn_2/(rn_1))^{1/2},$ and $p_3 = (Pr/(n_1n_2))^{1/2}$.
The bandwidth cost is $2(n_1n_2r/P)^{1/2}-(n_1n_2+n_1r)/P$, which matches the lower bound.

\section{Nystr\"{o}m Approximation}
\label{sec:nystrom}
In this section we consider an algorithm that performs two multiplications in sequence which are key to computing the Nystr\"{o}m approximation, $\M{B} = \M{A}\M{\Omega}$ and $\M{C} = \M{\Omega}^T\M{B}$.
Here $\M{A}$ is a symmetric matrix of dimensions $n\times n$ and $\M{\Omega}$ is a random matrix of dimensions $n\times r$ with $r<n$.  
The first multiplication requires $n^2r$ scalar multiplications, while the second requires $nr^2$ scalar multiplications.
Our method of proving the bound is able to handle the sequential nature of the computation through the assumption that each of the component computations is load balanced.
Our bound improves over the sum of the best lower bound for each component computation by taking into account that the data a processor requires from $\M{B}$ must be the union of the data required for each of the component computations.

\subsection{Communication Lower Bound}
\label{sec:nystrom:lb}
As with our previous bound, we begin by presenting the abstract optimization problem which allows us to determine the minimum amount of data a processor will require.

Similar to the proof of \cref{lem:randMatmulOpt}, the following lemma can be established by appropriately choosing the dual variables $\vc{\mu^*}$ and verifying that all the KKT conditions are satisfied.

\begin{lemma}
	\label{lem:NystromOpt}
	Consider the following optimization problem
	$$\min_{\V{x}\in\mathbb{R}^3} x_1 + x_2 + x_3 $$
	such that 
	$x_1x_2 \geq n^2r/P, x_2x_3 \geq nr^2/P, n^2/P \leq x_1, nr/P \leq x_2, \allowbreak r^2/P\leq x_3,$
	where $P,n$, and $r$ are all positive integers greater than or equal to 1, and $r<n$.
	The constraints induce four cases for the optimal solution $\V{x}^*$ based upon the relative sizes of $n$ and $r$:
	\begin{itemize}
		\item If $P\leq r$ then $\V{x}^*=\begin{bmatrix} n^2/P & nr/P & r^2/P\end{bmatrix};$
		\item If $r < P \leq n$ then $\V{x}^* =\begin{bmatrix} n^2/P & nr/P & r \end{bmatrix};$
		\item If $n < P \leq n(n+r)/r$ then $\V{x}^* = \begin{bmatrix} n^2/P & r & nr/P\end{bmatrix};$
		\item If $n(n+r)/r < P $ then $$\V{x}^* = \begin{bmatrix} n\left(\frac{nr}{(n+r)P}\right)^{1/2} & \left(\frac{nr(n+r)}{P}\right)^{1/2} & r\left(\frac{nr}{(n+r)P}\right)^{1/2}\end{bmatrix}.$$
	\end{itemize}
	This can be visualized as follows:
	\begin{center}
		\begin{tikzpicture}[scale=0.5, every node/.style={transform shape}]
		\draw [->, thick] (-0.1,0) -- (15,0) node [above,scale=2] {Increasing $P$};
		
		\draw (0, 0.1) -- node [below, pastelred, scale=2]{$1$}(0,-0.1);
		\draw (3, 0.1) -- node [below, pastelred, scale=2]{$r$}(3,-0.1);
		\draw (6, 0.1) -- node [below, pastelred, scale=2] {$n$}(6,-0.1);
		\draw (9.5, 0.1) -- node [below, pastelred, scale=2] {$\frac{n(n+r)}{r}$}(9.5,-0.1);
		
		\node[align=left,below,scale=1.5] at (1.5, -0.4) {$x_1^*=\frac{n^2}{P}$\\ $x_2^*=\frac{nr}{P}$\\ $x_3^*=\frac{r^2}{P}$};
		\node[align=left,below,scale=1.5] at (4.5, -0.4) {$x_1^*=\frac{n^2}{P}$\\ $x_2^*=\frac{nr}{P}$\\ $x_3^*=r$};		\node[align=left,below,scale=1.5] at (7.5, -0.6) {$x_1^*=\frac{n^2}{P}$\\ $x_2^*=r$\\ $x_3^*=\frac{nr}{P}$};
		\node[align=center,below,scale=1.5] at (13, -0.2) {$x_1^*=n\left(\frac{nr}{(n+r)P}\right)^{1/2}$\\ $x_2^*=\left(\frac{nr(n+r)}{P}\right)^{1/2}$\\ $x_3^*= r\left(\frac{nr}{(n+r)P}\right)^{1/2}$};	
		\end{tikzpicture}
	\end{center}
\end{lemma}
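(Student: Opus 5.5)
Following the proof of \cref{lem:randMatmulOpt}, I will use \cref{lem:KKT}: in each of the four cases I exhibit $\V{x}^*$ and dual variables $\V{\mu}^*\geq \V{0}$ and check the KKT conditions of \cref{def:KKT}. The objective and the three lower-bound constraints are affine. Each product constraint depends on only two coordinates, e.g.\ $n^2r/P - x_1x_2$. It is the composition of the function of \cref{lem:quasiconvex:xi} (with $d=2$) with a linear coordinate projection, and quasiconvexity is preserved under such a composition. So both product constraints are quasiconvex on the positive orthant, and \cref{lem:KKT} applies.

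Write $f(\V{x})=x_1+x_2+x_3$ and
$$g(\V{x})=\begin{bmatrix}\frac{n^2r}{P}-x_1x_2 & \frac{nr^2}{P}-x_2x_3 & \frac{n^2}{P}-x_1 & \frac{nr}{P}-x_2 & \frac{r^2}{P}-x_3\end{bmatrix}^\Tra.$$
Then stationarity reduces to three scalar equations:
$$\mu_1x_2+\mu_3=1,\qquad \mu_1x_1+\mu_2x_3+\mu_4=1,\qquad \mu_2x_2+\mu_5=1.$$
The duals I plan to use are as follows.
\begin{itemize}
\item \textbf{Case $P\le r$:} take $\V{\mu}^*=[0,0,1,1,1]$. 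Primal feasibility of the product constraints follows from $n^3r/P^2\ge n^2r/P$ (since $P\le n$) and $nr^3/P^2\ge nr^2/P$ (since $P\le r$).
\item \textbf{Case $r<P\le n$:} take $\mu_1=0$, $\mu_3=1$, $\mu_2=P/(nr)$, $\mu_4=1-P/n\ge 0$ and $\mu_5=0$. Here $x_3=r$ strictly exceeds $r^2/P$, and the constraint $x_2x_3=nr^2/P$ is tight.
\item \textbf{Case $n<P\le n(n+r)/r$:} both product constraints are tight, and $x_2=r>nr/P$, $x_3=nr/P>r^2/P$. So $\mu_4=\mu_5=0$ and $\mu_2=1/r$. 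The middle equation gives $\mu_1=(P-n)/n^2\ge 0$. Then $\mu_3=1-r(P-n)/n^2$, which is nonnegative exactly when $P\le n(n+r)/r$. This explains the case boundary.
\end{itemize}

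The last case requires deriving $\V{x}^*$ rather than merely checking it. I will set $\mu_3=\mu_4=\mu_5=0$ and make both product constraints tight. Stationarity then forces $\mu_1=\mu_2=1/x_2$ and $x_1+x_3=x_2$. Adding the two tight constraints gives $x_2(x_1+x_3)=x_2^2=nr(n+r)/P$. Hence $x_2=(nr(n+r)/P)^{1/2}$, $x_1=n^2r/(Px_2)$ and $x_3=nr^2/(Px_2)$, which are the stated formulas. It remains to check primal feasibility of the three lower bounds. The condition $x_1\ge n^2/P$ is equivalent to $P\ge n(n+r)/r$, which is the case hypothesis. The conditions $x_2\ge nr/P$ and $x_3\ge r^2/P$ are equivalent to $P\ge nr/(n+r)$ and $P\ge r^2/(n(n+r))$ respectively, and both are weaker. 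Complementary slackness holds in every case because each nonzero dual is attached to a tight constraint.

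I expect the main obstacle to be this bookkeeping: choosing the duals so that the dual-feasibility inequalities reproduce exactly the case thresholds $r$, $n$ and $n(n+r)/r$. The routine primal-feasibility inequalities in each case must also be confirmed using $r<n$.
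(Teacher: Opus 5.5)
Your proposal is correct and follows the paper's proof essentially exactly: the KKT certificates coincide in all four cases. Your Case 3 dual $\mu_3 = 1 - r(P-n)/n^2$ equals the paper's $(n(n+r)/r - P)r/n^2$, and your derived Case 4 duals $\mu_1=\mu_2=1/x_2^*$ equal the paper's $1/((n+r)t)$. One small slip: $x_3^* \geq r^2/P$ is equivalent to $P \geq r(n+r)/n$, not $P \geq r^2/(n(n+r))$, but your conclusion still holds because $r<n$ gives $r(n+r)/n < n(n+r)/r$.
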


\begin{proof}
	To begin note that the first and second constraints are  quasiconvex in the positive quadrant by \cref{lem:quasiconvex:xi}, and the objective function and all remaining constraints are affine functions. 
	Thus we can establish the optimality of the solution for every case by showing that the KKT conditions specified in \cref{def:KKT} are satisfied by \cref{lem:KKT}.

	To match standard notation, let $f(\V{x}) = x_1+x_2+x_3$ and
	$$g(\V{x}) = \begin{bmatrix}
	\frac{n^2r}{P}-x_1x_2 \\
	\frac{nr^2}{P}-x_2x_3 \\
	\frac{n^2}{P}-x_1 \\
	\frac{nr}{P}-x_2 \\
	\frac{r^2}{P}-x_3
	\end{bmatrix}.$$
	Then $\nabla f(\V{x}) = \begin{bmatrix}1 & 1 & 1 \end{bmatrix}$ and
	$$J_g(\V{x}) = \begin{bmatrix}
	-x_2 & -x_1 & 0 \\
	0 & -x_3 & -x_2 \\
	-1 & 0 & 0 \\
	0 & -1 & 0 \\
	0 & 0 & -1 \\
	\end{bmatrix}.
	$$
	For each case, we will give the optimal primal and dual solutions, and comment on why primal and dual feasibility conditions are satisfied.
	In every case stationarity can be directly verified, and complementary slackness holds as only dual variables $\mu_i$ corresponding to tight constraints are not zero.
	\paragraph{Case 1 ($P\leq r$)}
	Set $\V{x}^* = \begin{bmatrix} n^2/P & nr/P & r^2/P \end{bmatrix}$ and $\vc{\mu}^* = \linebreak[4] \begin{bmatrix}  0 & 0 & 1 & 1 & 1 \end{bmatrix}.$
	Primal feasibility holds because $P\leq r \leq n$ implies that $n^2r/P \leq n^2/P \cdot nr/P $ and $nr^2/P \leq nr/P \cdot r^2/P\text.$ Dual feasibility is immediate.
	\paragraph{Case 2 ($r < P \leq n$)}
	Set $\V{x}^* = \begin{bmatrix} n^2/P & nr/P & r \end{bmatrix}$ and $\vc{\mu}^* = \begin{bmatrix} 0 & P/(nr) & 1 & 1-P/n & 0 \end{bmatrix}.$
	Primal feasibility holds because $P \leq n$ implies that $n^2r/P \leq n^2/P \cdot nr/P$ and $r < P $ implies that $r^2/P < r$.
	Dual feasibility holds because $P \leq n.$
	\paragraph{Case 3 ($n < P \leq n(n+r)/r$)}
	Set $\V{x}^* = \begin{bmatrix} n^2/P & r & nr/P \end{bmatrix}$ and $\vc{\mu}^* = \begin{bmatrix}(P-n)/n^2  & 1/r & (n(n+r)/r -P)r/n^2 & 0 & 0\end{bmatrix}.$
	Primal feasibility holds because $n < P$ implies $nr/P < r$ and $r\leq n$ implies $r^2/P \leq nr/P$.
	Dual feasibility holds because $n < P$ and $P \leq n(n+r)/r$.
	\paragraph{Case 4 ($n(n+r)/r< P$)}
	Set $\V{x}^* = \begin{bmatrix}nt & (n+r)t & rt\end{bmatrix}$ \linebreak[4]
	and $\vc{\mu}^* = \begin{bmatrix} 1/((n+r)t) & 1/((n+r)t) & 0 & 0 & 0 \end{bmatrix} $ where  $t= \linebreak[4] (nr/((n+r)P))^{1/2}$.
	Primal feasibility holds because $r\leq n$ and $n(n+r)/r < P$ imply that $n^2/P < nt$, $nr/P < (n+r)t$ and $r^2/P < rt$. Dual feasibility is immediate.
\end{proof}

We prove the following theorem using the solution to the optimization problem in \cref{lem:NystromOpt}.
A visualization of the iteration space for the two multiplications and the related operands is given in \cref{fig:Nystrom:iterspace}, along with two valid parallelizations across 3 processors.

\begin{theorem}
	\label{thm:NystromLB}
	Consider the sequence of two computations, $\M{B}=\M{A}\cdot\M{\Omega}$ and $\M{C}=\M{\Omega^\Tra}\cdot\M{B}$ using classical matrix multiplication, where $\M{A}$, $\M{B}$ and $\M{C}$ have dimensions $n\times n$, $n\times r$ and $r \times r$, respectively, and $\M{\Omega}$ is a random matrix of dimensions $n \times r$ with $n>r$. 
	Any parallel algorithm using $P$ processors that load balances each multiplication and begins with one copy of the input matrix $\M{A}$ and ends with one copy of the output matrices $\M{B}$ and $\M{C}$ must communicate at least $ W-(n^2+nr+r^2)/P$ words of data where
	$$W = \begin{cases}
	\quad \frac{n^2+nr+r^2}{P} & \text{ if } \quad 1\leq P \leq r\\
	\quad \frac{n^2+nr}{P}+r & \text{ if } \quad r < P \leq n\\
	\quad \frac{n^2}{P}+r+\frac{nr}{P} & \text{ if } \quad n < P \leq n(n+r)/r\\
	\quad 2\left(\frac{nr(n+r)}{P}\right)^{1/2} & \text{ if } \quad n(n+r)/r < P\text.\\
	\end{cases}$$
\end{theorem}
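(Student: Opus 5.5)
We will mirror the proof of \cref{thm:randMatmulLB}, now tracking three arrays $\M{A}$, $\M{B}$, $\M{C}$ and two iteration spaces. Since the algorithm load balances each multiplication, every processor performs exactly $n^2r/P$ scalar multiplications of $\M{B}=\M{A}\M{\Omega}$ and $nr^2/P$ scalar multiplications of $\M{C}=\M{\Omega}^\Tra\M{B}$. By averaging, some processor owns at most $(n^2+nr+r^2)/P$ words of $\M{A}$ at the start plus $\M{B}$ and $\M{C}$ at the end; otherwise there would be more than one copy of the inputs or outputs. We fix that processor. Let $F_1\subseteq[n]\times[n]\times[r]$ be its iteration points for the first product, where $(i,j,k)$ multiplies $\M{A}(i,j)$ by $\M{\Omega}(j,k)$ into $\M{B}(i,k)$. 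Let $F_2\subseteq[r]\times[n]\times[r]$ be its iteration points for the second product, where $(k,i,l)$ multiplies $\M{\Omega}(i,k)$ by $\M{B}(i,l)$ into $\M{C}(k,l)$.

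We set $x_1=|\phi_{ij}(F_1)|$, the accessed entries of $\M{A}$, and $x_3=|\phi_{kl}(F_2)|$, the accessed entries of $\M{C}$. The key coupling variable is $x_2$, the number of entries of $\M{B}$ accessed. These entries form the union of the $\M{B}$-projection of $F_1$ (entries it writes) and the $\M{B}$-projection of $F_2$ (entries it reads), so $x_2$ is at least the size of each projection.

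The constraints come as follows.
\begin{itemize}
\item Applying \cref{thm:hbl:random} to $F_1$, using its $\M{A}$ and $\M{B}$ projections while ignoring the random $\M{\Omega}$ projection, gives $x_1x_2\ge |F_1| = n^2r/P$.
\item Applying \cref{thm:hbl:random} to $F_2$, using its $\M{B}$ and $\M{C}$ projections, gives $x_2x_3\ge nr^2/P$.
\item \cref{lem:proj} applied to the first product gives $x_1\ge n^2/P$ and a $\M{B}$-projection of size at least $nr/P$, hence $x_2\ge nr/P$.
\item \cref{lem:proj} applied to the second product, whose output $\M{C}$ is $r\times r$, gives $x_3\ge r^2/P$.
\end{itemize}
Since the processor must touch every element in these projections, its communication is at least $x_1+x_2+x_3$ minus what it owns. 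That is, it communicates at least $\min(x_1+x_2+x_3)-(n^2+nr+r^2)/P$.

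The minimum is exactly the problem of \cref{lem:NystromOpt}, and substituting its four optimal solutions $\V{x}^*$ yields the four values of $W$. In the fourth case,
\[
(n+2r)t+\ldots = (2n+2r)t = 2\left(\tfrac{nr(n+r)}{P}\right)^{1/2}.
\]

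The main obstacle is the $\M{B}$ accounting. We must justify counting $x_2$ once, as the union, rather than summing the two projections, while still inheriting both product constraints. This works because a single set lower-bounds both projections' sizes via monotonicity, so the same $x_2$ can appear in both hyperbolic constraints. A related subtlety is applying \cref{lem:proj} to the second product with the correct roles, since $\M{\Omega}^\Tra$ is $r\times n$ and $\M{B}$ is $n\times r$. We will also check that owned data may be subtracted in aggregate even though $\M{B}$ is an intermediate output.
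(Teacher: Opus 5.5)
Your proposal is correct and follows the paper's proof essentially step for step. You pick a processor owning at most $(n^2+nr+r^2)/P$ words, apply \cref{thm:hbl:random} to each iteration set, replace the two $\M{B}$-projections by their union (which dominates each, so one variable $x_2$ enters both hyperbolic constraints), add the \cref{lem:proj} bounds, and solve via \cref{lem:NystromOpt}; the only blemish is the garbled case-4 display, which should read $nt+(n+r)t+rt=2(n+r)t=2\left(nr(n+r)/P\right)^{1/2}$ with $t=(nr/((n+r)P))^{1/2}$.
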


\begin{proof}
As the algorithm load balances each multiplication, each processor must perform $n^2r/P$ scalar multiplications corresponding to iteration points of the first multiplication, and $nr^2/P$ scalar multiplications corresponding to iteration points of the second multiplication. 
Because the algorithm begins and ends the computation with at most one copy of the input and output matrices, there must be at least one processor that owns at most $(n^2+nr+r^2)/P$ elements of the input and output matrices.
We consider this processor.

Let $F$ be the set of iteration points the processor performs from the first multiplication, and $F'$ the set of iteration points the processor performs from the second multiplication.
So $|F| = n^2r/P,$ and $|F'| = nr^2/P.$
We let $\phi_{ij}(F)$ and  $\phi_{ik}(F)$ denote the projections of $F$ onto the input matrix $\M{A}$ and the output matrix $\M{B}$, and $\phi_{i'k'}(F')$ and $\phi_{j'k'}(F')$ denote the projections of $F'$ onto $\M{B}$ and $\M{C}$. 
Then \cref{thm:hbl:random} implies that  $|\phi_{ij}(F)||\phi_{ik}(F)|\geq |F| = n^2r/P$ and $ |\phi_{j'k'}(F')||\phi_{i'k'}(F')| \geq nr^2/P$. 
By \cref{lem:proj} we know that $|\phi_{ik}(F)| \geq nr/P$ and $|\phi_{i'k'}(F')| \geq nr/P$.
As $|\phi_{ik}(F)\cup\phi_{i'k'}(F')| \geq \max(|\phi_{ik}(F)|, |\phi_{i'k'}(F')|)$, we have $|\phi_{ij}(F)||\phi_{ik}(F)\cup\phi_{i'k'}(F')| \geq n^2r/P$, $|\phi_{j'k'}(F')||\phi_{ik}(F)\cup\phi_{i'k'}(F')|\geq nr^2/P$ and $|\phi_{ik}(F)\cup\phi_{i'k'}(F')| \allowbreak \geq nr/P$. 
By \cref{lem:proj}, we also know that $|\phi_{ij}(F)|\geq n^2/P,$ and $|\phi_{j'k'}(F')|\geq r^2/P$.

Thus, to minimize the communication we need to minimize the number of elements accessed by this processor, which corresponds to minimizing $|\phi_{ij}(F)| + |\phi_{ik}(F)\cup\phi_{i'k'}(F')| + |\phi_{j'k'}(F')|$ subject to the above constraints. 
Hence the solution to \cref{lem:NystromOpt} gives the minimum number of elements that must be accessed by this processor, and subtracting the data the processor can own from the result proves the lower bound. 
\end{proof}

\subsection{Parallel Algorithm}
\label{sec:nystrom:algo}
We organize $P$ processors into a $p_1\times p_2\times p_3$ processor grid to perform the first computation ($\M{B}=\M{A}\M{\Omega}$) and into a $q_1\times q_2\times q_3$ processor grid to perform the second computation ($\M{C}=\M{\Omega^\Tra}\M{B}$). 
The coordinate of a processor is represented as $(i,j,k)$ in the first grid and $(i',j',k')$ in the second grid. 
Both computations are performed in the same way, but their grid layouts may be different. We present our algorithm in \cref{alg:nystrom}.

\begin{algorithm}[htb]
	\caption{\label{alg:nystrom}Parallel algorithms for $\M{B}=\M{A}\M{\Omega}$ and $\M{C}=\M{\Omega^\Tra}\M{B}$.}
	\begin{algorithmic}[1]		
		\Require $\Pi$ is a $p_1\times p_2 \times p_3$  grid of processors and $\Psi$ is a $q_1 \times q_2 \times q_3$ grid of processors, and $|\Pi| = |\Psi| = P$.
		\Require $\M{A}$ is evenly divided into a $p_1\times p_2$ grid of rectangular blocks of dimension $n/p_1\times n/p_2$, and each block $\M{A}_{ij}$ is evenly divided across a set of $p_3$ processors with $\M{A}_{ij}^{(k)}$ owned by processor coordinate $(i,j,k)$ in $\Pi$.
		\Ensure $\M{B} = \M{A}\M{\Omega}$, $\M{B}$ is evenly divided across a $q_1\times q_3$ grid of blocks, and each block $\M{B}_{i'k'}$ is evenly divided across a set of $q_2$ processors with $\M{B}_{i'k'}^{(j')}$  owned by processor coordinate $(i',j',k')$ in $\Psi$.
		\Ensure $\M{C} = \M{\Omega^\Tra}\M{B}$, $\M{C}$ is evenly divided across a $q_2\times q_3$ grid of blocks, and each block $\M{C}_{j'k'}$ is evenly divided across a set of $q_1$ processors with $\M{C}_{j'k'}^{(i')}$ owned by processor coordinate $(i',j',k')$ in $\Psi$.
		\Function{$(\M{B}_{i'k'}^{(j')},\M{C}_{j'k'}^{(i')})=$ RandCompNys}{$\M{A}_{ij}^{(k)},\Pi, \Psi$}
		\State $(i,j,k) = \Call{MyCoordinate}{\Pi}$
		\State //Gather the required data of input matrix $\M{A}$
		\State $\M{A}_{ij}$ = \Call{All-Gather}{$\M{A}_{ij}^{(k)}$, $\Pi_{ij*}$}\label{alg:nystrom:line:allGatherMatrixA}
		\State //Generate the required $n/p_2 \times r/p_3$ portion of random matrix for $\M{B}=\M{A}\M{\Omega}$
		\State $\M{\Omega}_{jk}$ =  \Call{GenRandom}{$n/p_2,r/p_3$}
		\State //Perform first local matrix multiplication
		\State $\M{\bar{\hat{B}}}_{ik}=\M{A}_{ij} \cdot \M{\Omega}_{jk}$\label{alg:nystrom:line:firstlocalComputation}
		\State //Each block $\M{\hat{B}}_{ik}$ is evenly divided across a set of $p_2$ processors with $\M{\hat{B}}_{ik}^{(j)}$ is owned by processor coordinate $(i,j,k)$
		\State //Sum results to compute $\M{\hat{B}}_{ik}^{(j)}$
		\State $\M{\hat{B}}_{ik}^{(j)} = \Call{Reduce-Scatter}{\M{\hat{B}}_{ik},\Pi_{i*k}}$\label{alg:nystrom:line:ReduceScatterMatrixB}
		\State $(i',j',k') = \Call{MyCoordinate}{\Psi}$
		\State //Change distribution of $\M{\hat{B}}$ so that it is suitable for $q_1 \times q_2 \times q_3$ grid 
		\If{\emph{for any} $i, p_i \neq q_i$}
		\State $\M{B}_{i'k'}^{(j')} $ = \Call{Redistribute}{$\M{\hat{B}}$} \label{alg:nystrom:line:RedistMatrixB}
		\Else
		\State $\M{B}_{i'k'}^{(j')} = \M{\hat{B}}_{ik}^{(j)}$
		\EndIf 
		\State //Generate the required $n/q_1 \times r/q_2$ portion of random matrix for $\M{C}=\M{\Omega^\Tra}\M{B}$
		\State $\M{\Omega}_{i'j'}$ =  \Call{GenRandom}{$n/q_1,r/q_2$}
		\State //Gather the required data of matrix $\M{B}$
		\State $\M{B}_{i'k'}$ = \Call{All-Gather}{$\M{B}_{i'k'}^{(j')}$, $\Psi_{i'*k'}$}\label{alg:nystrom:line:allGatherMatrixB}
		\State //Perform second local matrix multiplication
		\State $\M{\bar{C}}_{j'k'}=\M{\Omega}_{i'j'}^\Tra \cdot \M{B}_{i'k'}$\label{alg:nystrom:line:secondlocalComputation}
		\State //Sum results to compute $\M{C}_{j'k'}^{(i')}$
		\State $\M{C}_{j'k'}^{(i')} = \Call{Reduce-Scatter}{\M{\bar{C}}_{j'k'},\Psi_{*j'k'}}$\label{alg:nystrom:line:ReduceScatterMatrixC}
		\EndFunction
	\end{algorithmic}
\end{algorithm}

In the $p_1\times p_2\times p_3$ grid, the algorithm performs two collective operations, All-Gather and Reduce-Scatter. 
Each processor receives the portion of the input matrix $\M{A}$ it needs to perform all of its computation through an All-Gather collective operation. 
After that, the processor generates the required portion of the random matrix and performs its local computation. 
The result of each local computation must be summed with all other contributions to the same output matrix entries from other processors to obtain $\M{B}$, and it is achieved by a Reduce-Scatter collective operation. 
If layout of both processor grids are different, then $\M{B}$ is redistributed according to the $q_1\times q_2\times q_3$ grid such that the second computation can be performed in this grid.

In the $q_1 \times q_2 \times q_3$ grid, the algorithm also performs an All-Gather and a Reduce-Scatter collective operations. 
Each processor receives the required portion of matrix $\M{B}$ through an All-Gather collective operation. 
After that, the processor generates the required portion of the random matrix and performs its local computation. 
In the end, a Reduce-Scatter collective operation is performed to obtain the final matrix $\M{C}$.

We use the same data distribution and notation as in~\cref{sec:randMatmulAlg}, and again impose that there is one copy of data in the system. 
The input matrix $\M{A}$ is distributed evenly among all processors at the start, and the output matrices $\M{B}$ and $\M{C}$ are distributed evenly at the end. 
We assume that $\M{A}_{ij}$ is distributed evenly among $\Pi_{ij*}$ processors at the beginning of the computation and denote the portion of $\M{A}_{ij}$ owned by processor $(i,j,k)$ by $\M{A}_{ij}^{(k)}$. 
At the end of the computation, $\M{B}_{i'k'}$ and $\M{C}_{j'k'}$ are distributed evenly among  $\Psi_{i'*k'}$ and $\Psi_{*j'k;}$ processors. 
We denote the portion of $\M{B}_{i'k'}, \M{C}_{j'k'}$ owned by processor $(i',j',k')$ by $\M{B}_{i'k'}^{(j')}$ and $\M{C}_{j'k'}^{(i')}$.

\subsubsection{Cost Analysis}
\label{sec:nystrom:algo:cost}

We now analyze computation and communication costs of the algorithm. 
Each processor performs $\frac{n}{p_1}\cdot\frac{n}{p_2}\cdot\frac{r}{p_3}$ and $\frac{r}{q_2}\cdot\frac{n}{q_1}\cdot\frac{r}{q_3}$ scalar multiplications in \cref{alg:nystrom:line:firstlocalComputation,alg:nystrom:line:secondlocalComputation}, respectively. 
Thus the total number of multiplications performed by a processor is $\frac{n^2r}{p_1p_2p_3} + \frac{nr^2}{q_1q_2q_3} = \frac{nr(n+r)}{P}$.
 
Communication occurs in the All-Gather collective in \cref{alg:nystrom:line:allGatherMatrixA,alg:nystrom:line:allGatherMatrixB} and the Reduce-Scatter collective in \cref{alg:nystrom:line:ReduceScatterMatrixB,alg:nystrom:line:ReduceScatterMatrixC}. 
In addition, if the two grid layouts differ then communication is also required to redistribute matrix $\M{B}$ in \cref{alg:nystrom:line:RedistMatrixB}.

The bandwidth costs of \cref{alg:nystrom:line:allGatherMatrixA,alg:nystrom:line:allGatherMatrixB,alg:nystrom:line:ReduceScatterMatrixB,alg:nystrom:line:ReduceScatterMatrixC} in \cref{alg:nystrom} are $\left(1-\frac{1}{p_3}\right)\frac{n^2}{p_1p_2}$, $\left(1-\frac{1}{p_2}\right)\frac{nr}{p_1p_3}$, $\left(1-\frac{1}{q_2}\right)\frac{nr}{q_1q_3}$ and $\left(1-\frac{1}{q_1}\right)\frac{r^2}{q_2q_3}$, respectively. 
Thus the overall bandwidth cost of \cref{alg:nystrom} is 
$\left(1-\frac{1}{p_3}\right)\frac{n^2}{p_1p_2} + \left(1-\frac{1}{p_2}\right)\frac{nr}{p_1p_3} + \left(1-\frac{1}{q_2}\right)\frac{nr}{q_1q_3} + \left(1-\frac{1}{q_1}\right)\frac{r^2}{q_2q_3} + BW(Redistr)$.

Here $BW(Redistr)$ denotes the bandwidth cost to redistribute matrix $\M{B}$ according to the second grid layout. 
Note that both $\M{\hat{B}}_{ik}^{(j)}$ and $\M{B}_{i'k'}^{(j')}$ in \cref{alg:nystrom} contain $nr/P$ elements. 
Thus each processor needs to send at max $nr/P$ words and receive at max $nr/P$ words. 
Thus the bandwidth cost to redistribute matrix $\M{B}$ in a fully connected network is at max $nr/P$~\cite{Thakur:CollectiveCommunications:2005,Chan:CollectiveCommunications:2007}.
 
The latency costs of \cref{alg:nystrom:line:allGatherMatrixA,alg:nystrom:line:allGatherMatrixB,alg:nystrom:line:ReduceScatterMatrixB,alg:nystrom:line:ReduceScatterMatrixC} in \cref{alg:nystrom} are $\log(p_3)$, $\log(q_2)$, $\log(p_2)$ and $\log(q_1)$, respectively. 
Thus the overall latency cost of the algorithm is $\log(p_3) + \log(p_2) + \log(q_2) + \log(q_1) + L(Reddistr)$. Here $L(Redistr)$ denotes the latency cost to redistribute matrix $\M{B}$ according to the second grid layout. In a fully connected network of $P$ processors using a point-to-point algorithm, it is $O(P)$.
The latency cost can be reduced to $O(\log P)$ using a bidirectional-exchange algorithm at the expense of a bandwidth cost that grows by a factor of $O(\log P)$ \cite{Thakur:CollectiveCommunications:2005,Chan:CollectiveCommunications:2007}.

\subsection{Processor Grid Selection}
We consider two approaches to selecting the processor grid dimensions.
In the first, we select the grid dimensions $p_i$ and $q_i$ based on the lower bounds (\cref{thm:NystromLB}), whereas in the second we select the dimensions so that $\M{B}$ does not need to be redistributed.

In our first approach of selecting a processor grid, we do not expect the bandwidth cost of our algorithm to be optimal because the lower bounds have one term to express the access of matrix $\M{B}$ but in our algorithm communication happens twice for $\M{B}$. 
Additionally, our lower bound frameworks do not model redistribution cost which we need to perform.
The lower bound has four cases, we discuss for each one separately.
  
\paragraph{Case 1 ($P\leq r$)}
We set $p_1=P, p_2=p_3=1$ and $q_1= q_2 = 1, q_3=P$. 
Thus the bandwidth cost is $n^2/P+2nr/P+r^2/P-(n^2+2nr+r^2)/P + BW(Redistr)  \leq nr/P$, which is at most $nr/P$ greater than the lower bound.

\paragraph{Case 2 ($r < P \leq  n$)}
We set $p_1=P, p_2=p_3=1$ and $q_1=P/r, q_2 = 1, q_3=r$. 
The bandwidth cost is $n^2/P+2nr/P+ r -(n^2+2nr+r^2)/P + BW(Redistr)  \leq r-r^2/P+ nr/P$, which is at most $nr/P$ greater than the lower bound.

\paragraph{Case 3 ($n < P \leq  n(n+r)/r$)}
We set $p_1=n, p_2=P/n,p_3=1$ and $q_1=n/r, q_2 = P/n, q_3=r$. 
The bandwidth cost is $n^2/P+2r+nr/P -(n^2+2nr+r^2)/P + BW(Redistr)  \leq 2r-r^2/P$, which is at most $r$ greater than the lower bound.

\paragraph{Case 4 ($n(n+r/r) <  P$)}
We set $p_1=n, p_2=((n+r)P/(nr))^{1/2},p_3=(rP/(n(n+r)))^{1/2}$ and $q_1=(nP/(r(n+r)))^{1/2} n/r, q_2 = ((n+r)P/(nr))^{1/2}, q_3=r$. 
The bandwidth cost is $n(nr/((n+r)P))^{1/2} + 2(nr(n+r)/P)^{1/2} + r(nr/((n+r)P))^{1/2} -(n^2+2nr+r^2)/P + BW(Redistr)  \leq 3(nr(n+r)/P)^{1/2} - (n^2+nr+r^2)/P $, which is at most $(nr(n+r)/P)^{1/2}$ greater than the lower bound.

The second approach is based on the fact that $\M{B} = \M{A}\M{\Omega}$ is the most expensive part. We set $p_i$ based on~\cref{sec:randMatmulAlg:optprocgrid}, and take $q_i=p_i$. Note that this approach does not require redistribution for matrix $\M{B}$. 
For $P\leq n$, we set $p_1=q_1=P$ and $p_2=q_2=p_3=q_3=1$. 
The bandwidth cost of our algorithm is $n^2/P + nr/P + r^2 - (n^2+nr+r^2)/P = r^2 - r^2/P$. 
In certain settings, particularly when $P$ is smaller than $n/r$, this cost is lower than that of Case 1 of the first approach.

We implement two variants of the algorithm based on 1D matrix multiplications and compare their performance empirically in \cref{sec:experiments}.\linebreak
The first variant is based on Case 1 of the first approach ($p_1=q_3=P$) and we call it \emph{Redist} algorithm.
The second variant is based on the second approach with $p_1=q_1=P$ and we call it \linebreak\emph{No-Redist} algorithm.
These are the two most efficient variants when $P<r$, which we expect to hold in most practical situations.
\Cref{fig:Nystrom:iterspace} shows distribution of the computation for both variants across 3 processors.

\begin{figure}[tb]
	\centering
	\input{fig/nystromViz.tex}
	\caption{Iteration space of $\M{B} = \M{A}\M{\Omega}$ and $\M{\Omega}^\Tra\M{B}=\M{C}$ computation with a total of $n(n+r)r$ iteration points. The faces show the accesses to different matrices and the shading corresponds to distribution of the computation across 3 processors.  The prism on the left depicts the algorithm with $p_1=q_3=P$ (1D algorithms with \emph{Redist}ribution of $\M{B}$), and the prism on the right depicts the algorithm with $p_1=q_1=P$ (1D algorithms with \emph{No-Redist}ribution of $\M{B}$). \label{fig:Nystrom:iterspace}}
\end{figure}
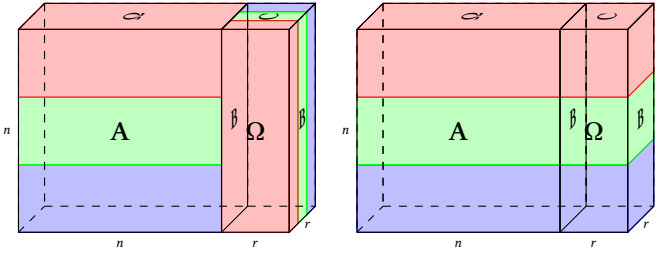

\section{Experiments}
\label{sec:experiments}
\input{experiments.tex}

\section{Conclusion}
\label{sec:conclusion}

In this work, we establish communication lower bounds for the multiplication of a matrix with a random matrix and the computations of Nystr\"{o}m approximation. Our lower bound proofs rely on a new geometric inequality that relates the size of a $3$-dimensional set to its $2$-dimensional projections. We employ this inequality, along with additional constraints, to formulate communication lower bounds as constrained optimization problems. We solve these problems analytically to derive bounds. For the multiplication of a matrix with a random matrix, we show that our bounds are tight in all ranges by presenting a communication optimal algorithm. 
We also demonstrate that the communication cost of our algorithm for Nystr\"{o}m approximation is close to the lower bound. 

We demonstrate that no communication is necessary for computing $\M{A}\M{\Omega}$ using a 1D algorithm when the number of processors does not exceed the number of rows of $\M{A}$.
In the case of Nystr\"{o}m approximation, we implement two algorithms based on 1D matrix multiplication that are the most efficient variants in practical situations. 
Our numerical experiments on modern state-of-the-art supercomputing systems equipped with CPUs and GPUs demonstrate their parallel scalability.

We are interested in extending our approaches to other sketching methods that use structured matrices such as count sketch, more general sparse embeddings, and subsampled randomized Hadamard transform methods. 
In Nystr\"{o}m approximation, both input  and output matrices are symmetric. 
We plan to exploit symmetry to further reduce communication and computation costs in the future. 
In this work, we assume that processors load balance each multiplication of the Nystr\"{o}m approximation. 
We are also interested to study the case where the overall computation is balanced.

\begin{acks}
This project has received funding through the UKRI Digital Research Infrastructure Programme through the Science and Technology Facilities Council’s Computational Science Centre for Research Communities (CoSeC).
This project received funding from the European Research Council (ERC) under European Union's Horizon 2020 research and innovation program (grant agreement 810367) and the Platform for Advanced Scientific Computing (PASC) project RandESC.
This work is supported by the National Science Foundation under grant CCF-1942892. This material is based upon work supported by the US Department of Energy, Office of Science, Advanced Scientific Computing Research program under awards DE-SC0023296 and DE-SC0025394.
\end{acks}
%


\bibliographystyle{ACM-Reference-Format}
\bibliography{refs}

\end{document}

%% file: abstract.tex
Sketching is widely used in randomized linear algebra for low-rank matrix approximation, column subset selection, and many other problems, and it has gained significant traction in machine learning applications.
However, sketching large matrices often necessitates distributed memory algorithms, where communication overhead becomes a critical bottleneck on modern supercomputing clusters. 
Despite its growing relevance, distributed-memory parallel strategies for sketching remain largely unexplored.
In this work, we establish communication lower bounds for sketching using dense matrices that determine how much data movement is required to perform it in parallel. 
One important observation of our lower bounds is that no communication is required for a small number of processors. 
We show that our lower bounds are tight by presenting communication optimal algorithms. 
Furthermore, we extend our approach to determine communication lower bounds for computations of Nyström approximation where sketching is applied twice. 
We also introduce novel parallel algorithms whose communication costs are close to the lower bounds. 
Finally, we implement our algorithms on modern state-of-the-art supercomputing infrastructures which have both CPU- and GPU-equipped systems and demonstrate their parallel scalability.

%% file: fig/nystromViz.tex
\newcommand{\nOne}{9}
\newcommand{\nTwo}{9}
\newcommand{\nThree}{3}
\newcommand{\nP}{3}
\pgfmathsetmacro{\offset}{.33}
\newcommand{\mycolor}{red}
\newcommand{\x}{0}

\newcommand{\shadingleft}{
\begin{scope}[canvas is zx plane at y=(\nTwo-.5-\x),rotate=90,shift={(-\nOne+.5,-.5)}]
	\draw[draw=\mycolor,fill=\mycolor!25] (0,\nThree) rectangle (\nOne,0);
\end{scope}
\begin{scope}[canvas is yz plane at x=.5,rotate=-90,yscale=-1,shift={(-.5,-\nOne+.5+\x)}]
	\draw[draw=\mycolor,fill=\mycolor!25] (0,0) rectangle (\nThree,\nOne/\nP);
\end{scope}
\begin{scope}[canvas is yx plane at z=.5,yscale=-1,rotate=180,shift={(-\nTwo+\x+.5,.5)}]
	\draw[draw=\mycolor,fill=\mycolor!25] (0,0) rectangle (\nOne/\nP,-\nOne);
\end{scope}
}

\newcommand{\shadingright}{
\begin{scope}[canvas is yx plane at z=.5-\x,yscale=-1,rotate=180,shift={(.5,.5)}]
	\draw[draw=\mycolor,fill=\mycolor!25] (0,0) rectangle (-\nOne,\nThree);
\end{scope}
\begin{scope}[canvas is yz plane at x=.5+\nThree,rotate=-90,yscale=-1,shift={(-.5+\x,-\nOne+.5)}]
	\draw[draw=\mycolor,fill=\mycolor!25] (0,0) rectangle (\nThree/\nP,\nOne);
\end{scope}
\begin{scope}[canvas is zx plane at y=(\nTwo-.5),rotate=90,shift={(.5,-.5+\x)}]
	\draw[draw=\mycolor,fill=\mycolor!25] (0,0) rectangle (\nThree,\nThree/\nP);
\end{scope}
}

\newcommand{\shadingnoredist}{
\begin{scope}[canvas is yx plane at z=.5,yscale=-1,rotate=180,shift={(-\nTwo+\x+.5,.5)}]
	\draw[draw=\mycolor,fill=\mycolor!25] (0,\nThree) rectangle (\nOne/\nP,-\nOne);
\end{scope}
\begin{scope}[canvas is zx plane at y=(\nTwo-.5-\x),rotate=90,shift={(-\nOne+.5,-.5)}]
	\draw[draw=\mycolor,fill=\mycolor!25] (0,\nThree) rectangle (\nOne+\nThree,0);
\end{scope}
\begin{scope}[canvas is yz plane at x=.5+\nThree,rotate=-90,yscale=-1,shift={(-.5,-\nOne+.5+\x)}]
	\draw[draw=\mycolor,fill=\mycolor!25] (0,0) rectangle (\nThree,\nOne/\nP);
\end{scope}
}

\newcommand{\compprism}{
\node[shift={(.5,-.5,.5)},scale=2] at (-\nOne/2,-0.5,0) {$n$};
\node[shift={(.5,-.5,.5)},scale=2] at (\nThree/2,-0.5,0) {$r$};
\node[shift={(.5,-.5,.5)},scale=2] at (\nThree+0.5/2,-0.5/2,-\nThree/2) {$r$};
\node[shift={(.5,-.5,.5)},scale=2] at (-\nOne-0.5,\nTwo/2,0) {$n$};

\begin{scope}[canvas is yz plane at x=.5,rotate=-90,yscale=-1,shift={(-.5,-\nTwo+.5)}]
	\draw[black,xscale=\nThree/1,yscale=\nTwo/1] (0,0) grid (1,1);
	\node[yscale=-1,scale=2] at (\nThree/2,\nTwo/2) {\Huge$\M{B}$};
\end{scope}
		
\begin{scope}[canvas is yz plane at x=.5-\nOne,rotate=-90,yscale=-1,shift={(-.5,-\nTwo+.5)}]
	\draw[black,dashed,xscale=\nThree/1,yscale=\nTwo/1] (0,0) grid (1,1);
\end{scope}

\begin{scope}[canvas is yx plane at z=.5,yscale=-1,rotate=180,shift={(-\nTwo+.5,-\nOne+.5)}]
	\draw[black,xscale=\nTwo/1,yscale=\nOne/1] (0,0) grid (1,1);
	\node[rotate=90,scale=2] at (\nTwo/2,\nOne/2) {\Huge$\M{A}$};
\end{scope}

\begin{scope}[canvas is yx plane at z=.5-\nThree,yscale=-1,rotate=180,shift={(-\nTwo+.5,-\nOne+.5)}]
	\draw[black,dashed,xscale=\nTwo/1,yscale=\nOne/1] (0,0) grid (1,1);
\end{scope}

\begin{scope}[canvas is zx plane at y=(\nTwo-.5),rotate=90,shift={(-\nOne+.5,-.5)}]
	\draw[black,xscale=\nOne/1,yscale=\nThree/1] (0,0) grid (1,1);
	\node[rotate=90,scale=2] at (\nOne/2,\nThree/2) {\Huge$\M{\Omega}$};
\end{scope}
		
\begin{scope}[canvas is zx plane at y=(\nTwo-.5),rotate=90,shift={(.5,-.5)}]
	\draw[black,xscale=\nThree/1,yscale=\nThree/1] (0,0) grid (1,1);
	\node[rotate=90,scale=2] at (\nThree/2,\nThree/2) {\Huge$\M{C}$};
\end{scope}
		
\begin{scope}[canvas is yx plane at z=.5,yscale=-1,rotate=180,shift={(.5,.5)}]
	\draw[black,xscale=\nTwo/1,yscale=\nThree/1] (0,0) grid (-1,1);
	\node[rotate=90,scale=2] at (-\nTwo/2,\nThree/2) {\Huge$\M{\Omega}$};
\end{scope}
		
\begin{scope}[canvas is yz plane at x=.5+\nThree,rotate=-90,yscale=-1,shift={(-.5,-\nTwo+.5)}]
	\draw[black,xscale=\nThree/1,yscale=\nTwo/1] (0,0) grid (1,1);
	\node[yscale=-1,scale=2] at (\nThree/2,\nTwo/2) {\Huge$\M{B}$};
\end{scope}
		
\begin{scope}[canvas is yx plane at z=.5-\nThree,yscale=-1,rotate=180,shift={(-\nTwo+.5,.5)}]
	\draw[black,dashed,xscale=\nTwo/1,yscale=\nThree/1] (0,0) grid (1,1);
\end{scope}
}

\begin{tikzpicture}[every node/.append style={transform shape},scale=.3]
		
	\renewcommand{\x}{6}
	\renewcommand{\mycolor}{blue}
	\shadingleft
		
	\renewcommand{\x}{3}
	\renewcommand{\mycolor}{green}
	\shadingleft
			
	\renewcommand{\x}{0}
	\renewcommand{\mycolor}{red}
	\shadingleft	
			
	\renewcommand{\x}{2}
	\renewcommand{\mycolor}{blue}
	\shadingright
			
	\renewcommand{\x}{1}
	\renewcommand{\mycolor}{green}
	\shadingright
			
	\renewcommand{\x}{0}
	\renewcommand{\mycolor}{red}
	\shadingright
		
	\compprism
	
	\begin{scope}[shift={(\nOne+\nThree+\nThree,0)}]
		\renewcommand{\x}{6}
		\renewcommand{\mycolor}{blue}
		\shadingnoredist
		
		\renewcommand{\x}{3}
		\renewcommand{\mycolor}{green}
		\shadingnoredist
			
		\renewcommand{\x}{0}
		\renewcommand{\mycolor}{red}
		\shadingnoredist
	
		\compprism
	\end{scope}

\end{tikzpicture}

%% file: experiments.tex

\subsection{Implementation and Experimental Setting}

\textbf{Implementation.}
We implement two above mentioned algorithms (\emph{Redist} and \emph{No-Redist}) for both CPU-only and GPU equipped systems using C++ and CUDA.
The compilers and libraries used for our implementation are given in \cref{tab:system_info}.
We perform all our experiments in double precision.
All of our experiments use the Intel MKL library for local CPU performance.
We also benchmarked single-threaded and multi-threaded matrix multiplication using Cray's LibSci library and AMD's AOCL library and observed comparable performance across all three.
In order to ease the adoption of our algorithms by practitioners, we also implement our algorithms using Python-based libraries for CPU-only systems.
For our Python implementation we use Numpy version 1.26.3 for local computation and mpi4py version 3.1.5 to perform interprocessor communication.

\begin{table}[!b]
	\centering
	\caption{Overview of the evaluation platform.}
	\begin{tabular}{r l l}
		\toprule
		\multicolumn{3}{c}{\textbf{NERSC Perlmutter GPU Partition}} \\
		\toprule
		Processor & \multicolumn{2}{c}{1 $\times$ AMD EPYC 7763 per node} \\
		\# NUMA domains & \multicolumn{2}{c}{4} \\
		\# Cores & \multicolumn{2}{c}{64 per node, 16 per NUMA domain }\\
		\# Hyperthreads & \multicolumn{2}{c}{2 per core, 32 per NUMA domain }\\
		\# GPUs & \multicolumn{2}{c}{4 $\times$ NVIDIA A100 per node }\\
		Memory & \multicolumn{2}{c}{256 GB DDR4 per node, 40GB per GPU }\\
		\toprule
		& \textbf{CPU-only} & \textbf{GPU} \\
		\toprule
		Compiler & Intel C++  & NVIDIA CUDA C++ \\
		& version 2024.1.0 & version 24.5-1 \\
		BLAS and PRNG & MKL 2024.1 & CUDA toolkit 12.4 \\
		MPI & \multicolumn{2}{c}{CRAY MPICH 8.1.30 (CUDA-aware for GPUs)} \\
		\bottomrule
	\end{tabular}

	\label{tab:system_info}
\end{table}

\textbf{Evaluation platform.}
We evaluate our implementation on the GPU partition of NERSC Perlmutter using 32 compute nodes (2048 cores, 128 GPUs).
The details of each compute node are given in \cref{tab:system_info}.
For both GPU and CPU-only evaluations, we run 4 MPI processes per compute node, assigning each process to a distinct GPU or CPU NUMA domain.
In the CPU-only runs, each process utilizes 32 threads for multithreading - 16 physical cores within its own NUMA domain and two logical threads per core. We report the average execution time over 10 runs for each experiment.

\begin{figure}[!t]
	\centering
	\input{plots/cpp_python.tex}
	\caption{Performance of our 3D-distributed memory matrix multiplication implemented in C++ and Python, multiplying two $50k \times 50k$ double precision matrices.
		We also include SLATE~\cite{Gates2019SLATE}, a widely-used distributed linear algebra library, as a reference for our C++ implementation.}
	\label{fig:matmul-cpp-v-py}
\end{figure}

\textbf{Parallel performance comparison of C++ vs Python.}
To compare the parallel performance between C++ and Python we perform a CPU-only experiment with distributed 3D matrix multiplication by multiplying two $50k \times 50k$ double precision matrices.
The multiplication operation involved gathering parts of both matrices, performing multiplication locally and then reducing the partial result by MPI reduce-scatter.
The result of the experiment is presented in \cref{fig:matmul-cpp-v-py}.
We observe that the performance of local matrix multiplication is comparable across the two implementations as both of them utilize MKL for BLAS operations.
However, even though both of the implementations use the same underlying MPI library, the communication costs were significantly higher for the Python implementation.
Because we could not explain the performance degradation and the C++ implementation is more efficient, in the following subsections, we omit our Python implementation for performance evaluation.
We also compare our C++ implementation against SLATE~\cite{Gates2019SLATE}, a widely-used distributed linear algebra library, on the same hardware.
Our implementation performs marginally faster than SLATE, validating the efficiency of our baseline implementation.

\subsection{Datasets}

\textbf{Metabarcoding genetic dissimilarity matrix.}
For an experiment focused on matrix multiplication with a random matrix, we consider a genetic dissimilarity matrix arising in a metabarcoding application, where the dissimilarity matrix is a $10^6 \times 10^6$ symmetric matrix~\cite{DS:NKTRHO_2023}.
In this matrix, each entry represents genetic distance between a pair of diatoms~\cite{agullo2022task}.
Following previous work, we use $r=1000$ in sketching this data \cite{ABC+23}.

\textbf{CIFAR-10 kernel matrix.}
In order to compare Nystr\"{o}m algorithms across meaningful matrix dimensions, we use the CIFAR-10 dataset \cite{K09}, a benchmark dataset for image classification/object recognition that consists of 50,000 images of $32\times 32$ RGB pixels.
The dataset has been used previously to study Nystr\"{o}m aproximation~\cite{frangella2023randomized}.
We consider the data as a $50,000 \times 3072$ input matrix and compute a symmetric kernel matrix using two different kernel functions.
To generate a kernel matrix with known rank, we use a linear kernel: $A_{ij} = K(\V{x}_i,\V{x}_j) = \V{x}_i^\Tra\V{x}_j$, where $\V{x}_i$ and $\V{x}_j$ are rows of the input matrix.
To generate a more realistic kernel with unknown rank, we also use the radial basis function (RBF) kernel: $A_{ij} = K(\V{x}_i,\V{x}_j) = \text{exp}\left(-\frac{\|\V{x}_i - \V{x}_j\|^2}{2\sigma^2}\right)$, for some parameter $\sigma$.
We compute Nystr\"{o}m approximation with various ranks and report the relative approximation error in \cref{tab:cifar10-error}.
In computing the explicit reconstruction of the approximation, we construct the pseudoinverse of the core matrix using a numerical tolerance of \texttt{1e-12}.

From the data, we see that the linear kernel is approximated to high accuracy for $r=5000$, which exceeds the true rank of the data, but it is also well approximated by ranks as low as $r=500$.
The RBF kernel yields higher numerical ranks, depending on the parameter $\sigma$, and may need ranks of at least $r=5000$ for a meaningful approximation.
Based on this dataset, we target the input matrix dimension $n=50000$ and ranks $r=500$ and $r=5000$ in our experiments.

\begin{table}[!t]
    \centering
	\caption{Approximation error of the CIFAR‑10 kernel matrix for different kernel functions and different ranks.}
	\begin{tabular}{| c | c c c |}
	  \hline
	  Kernel & r=500 & r=2500 & r=5000 \\
	  \hline
	  Linear & \texttt{5.8e-04} & \texttt{1.6e-05} & \texttt{6.3e-06} \\      
	  RBF $\sigma=\|\M{X}\|/\sqrt{n}$ & \texttt{1.0e-03} & \texttt{2.0e-04}  & \texttt{9.3e-05} \\     
	  RBF $\sigma=1$ & \texttt{9.9e-01} & \texttt{9.7e-01}  & \texttt{9.5e-01} \\     \hline
	\end{tabular}
	\label{tab:cifar10-error}
\end{table}

\subsection{Communicate vs Redundantly Generate $\M{\Omega}$}
\label{subsec:gen-vs-comm}

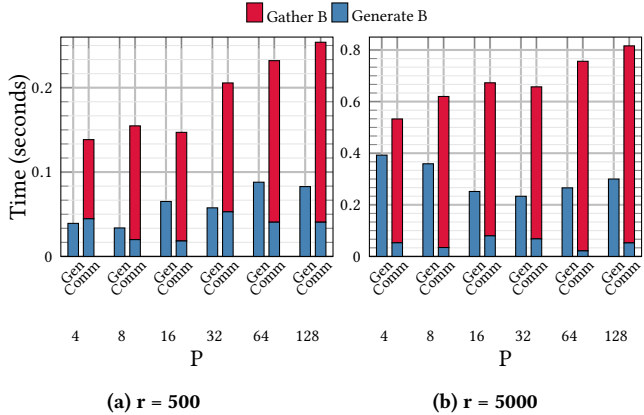
\begin{figure}[t]
    \centering

    \begin{subfigure}[t]{0.48\columnwidth}   
        \centering
        \input{plots/matmul_cpu_500.tex}
        \caption*{(a) r = 500}
    \end{subfigure}
    \hfill
    \begin{subfigure}[t]{0.48\columnwidth}   
        \centering
        \input{plots/matmul_cpu_5000.tex}
        \caption*{(b) r = 5000}
    \end{subfigure}

    \caption{Comparison between generating $\M{\Omega}$ redundantly vs communicating $\M{\Omega}$ in CPU-only systems for different values of $r$ to compute $\M{B}=\M{A}\M{\Omega}$ where $\M{A}$ is a CIFAR10 kernel matrix with dimensions $50k \times 50k$.}
    \label{fig:matmul1-gen-v-comm}
\end{figure}

The multiplication $\M{B}=\M{A} \M{\Omega}$ is the most expensive part of both of our algorithms as it involves the large input data. 
We use the row-wise 1D algorithm in both cases because the algorithm requires communicating only one matrix, although each processor requires the entire $\M{\Omega}$.
Because it is a random matrix, it is possible to generate the entire $\M{\Omega}$ with a shared seed on each process as opposed to generating parts and all-gathering it from other processors.
In this way, no inter-processor communication is required.

\Cref{fig:matmul1-gen-v-comm} depicts the experimental results comparing the two approaches to instantiating $\M{\Omega}$ on all $P$ processors using CPUs.
In this experiment, we compute $\M{B}=\M{A} \M{\Omega}$ where $\M{A}$ is a CIFAR10 kernel matrix and $\M{\Omega}$ has 500 and 5000 columns, but we show only the costs of accumulating $\M{\Omega}$.
Here we generate $\M{\Omega}$ as a uniform random matrix using the counter-based pseudorandom number generation algorithm Philox \cite{salmon2011parallel} available on both MKL and cuRAND.
We show only the data for the CPU-only experiments as we find similar qualitative behavior on GPUs.
We see that generating $\M{\Omega}$ redundantly takes less time than communicating it through out all experiments, even at low process count.
While the communication time increases with the increase of number of processes, the random number generation remains constant as expected.
Based on the evidence from this experiment, we generate $\M{\Omega}$ redundantly in both of our algorithms using the same generators used in this evaluation.

\subsection{Matrix Multiplication with Random Matrix}
\label{subsec:matmul1gen-scaling}
 
  \begin{figure}[b]
	\centering
	\input{plots/matmul1gen_cpu_scaling.tex}
	\caption{
		Total runtime of multiplying $10^6 \times 10^6$ genetic dissimilarity matrix with a random matrix with $10^3$ columns.
	}
	\label{fig:matmul1gen-scaling}
	\end{figure}
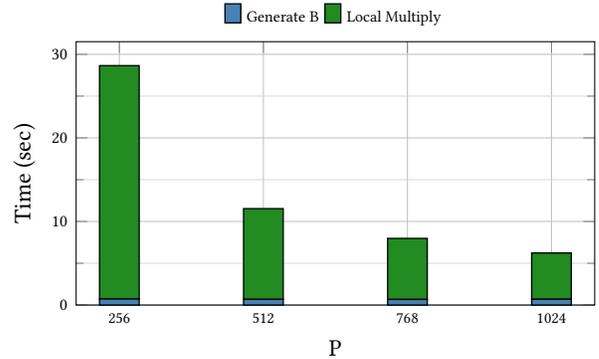

We now consider a strong scaling experiment of sketching the metabarcoding data, a large $10^6 \times 10^6$ symmetric matrix.
Given our observation in \cref{subsec:gen-vs-comm}, we redundantly generate the random matrix rather than communicating it in this experiment.
\Cref{fig:matmul1gen-scaling} presents the timing results of the experiment.
We start the experiment using 64 nodes (256 MPI processes), as that is the memory required to store the data in memory.
We note that these experiments are CPU-only because the input matrix does not fit into GPU memory for a reasonable number of GPUs, making it impractical to benchmark effectively.

As expected, we observe perfect strong scaling.
When the number of rows of $\M{A}$ is large, the 1D algorithm performs no inter-process communication: each process independently generates the entire random matrix $\M{\Omega}$ and performs local matrix multiplication. 
Thus, we expect perfect strong scaling as long as neither the generation of $\M{\Omega}$ nor degraded local multiplication performance due to poor aspect ratio becomes a bottleneck.
In fact, we observe greater than $2\times$ speedup from $P=256$ to $P=512$, which is due to the local output matrix fitting into cache for $P\geq 512$.
For input matrices that are short and wide, we expect that switching to the 2D algorithm will become necessary for smaller ranges of $P$.

\subsection{Nystr\"{o}m Computation: CPU vs GPU}
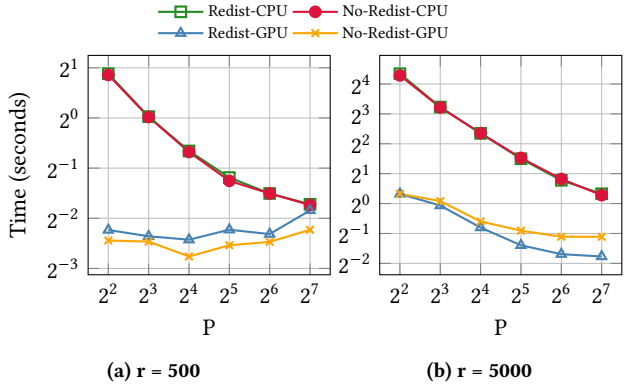
\begin{figure}[t]
    \centering

    \begin{subfigure}{0.48\columnwidth}
        \centering
        \input{plots/lineplot500.tex}
        \caption*{(a) r = 500}
    \end{subfigure}
    \hfill
    \begin{subfigure}{0.48\columnwidth}
        \centering
        \input{plots/lineplot5000.tex}
        \caption*{(b) r = 5000}
    \end{subfigure}
\hfill

    \caption{Total runtime of Nystr\"{o}m computation with \emph{Redist} and \emph{No-Redist} algorithms on both CPU-only and GPU equipped systems when approximating CIFAR10 kernel matrix to different ranks.}
    \label{fig:nystrom-allfour}
\end{figure}

\begin{figure}[t]
    \centering

    \begin{subfigure}{0.475\columnwidth}
        \centering
        \input{plots/nystrom_cpu_5000_all.tex}
        \caption*{(a) CPU}
    \end{subfigure}
    \hfill
    \begin{subfigure}{0.475\columnwidth}
        \centering
        \input{plots/nystrom_gpu_5000_all.tex}
        \caption*{(b) GPU}
    \end{subfigure}
    \caption{
        Runtime breakdown of Nystr\"{o}m computation with \emph{Redist} and \emph{No-Redist} algorithms on both CPU-only and GPU equipped systems when approximating CIFAR10 kernel matrix to $r=5000$.
        \emph{No-Redist} uses a reduce-scatter collective to perform communication while \emph{Redist} uses an all-to-all.
        Our storage format is column major order, therefore there is an unpacking step after all-to-all.
    }
    \label{fig:nystrom-breakdown-r5000}
\end{figure}
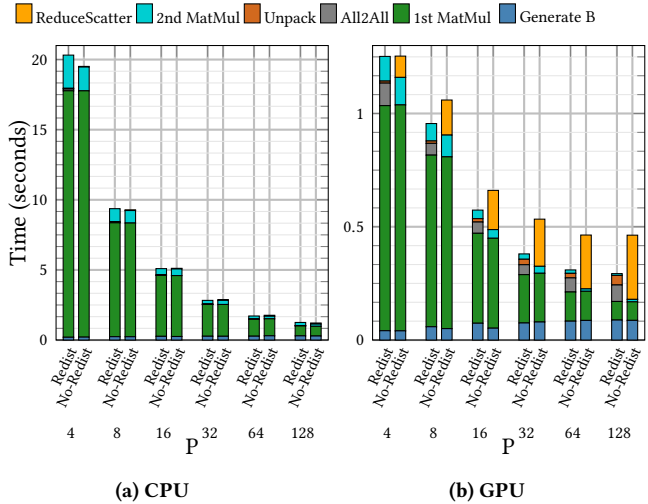

We benchmark the performance to approximate the CIFAR10 kernel matrices to rank 500 and 5000 using our algorithms.
We present the performance and scalability of both of our algorithms on both systems (CPU-only and GPU) in \cref{fig:nystrom-allfour}.
Our observations from this evaluation are as follows -

\textbf{GPU implementations are faster}.
We observe that for both smaller and larger $r$, GPU implementations are about an order of magnitude faster than the CPU implementation counterpart.
This is expected due to faster matrix multiplication on GPUs, which is the dominant cost as shown in \cref{fig:nystrom-breakdown-r5000}.

\textbf{CPU-only implementations scale better}.
Even though GPU implementations are faster, CPU-only implementations scale better than both algorithms on GPUs.
We observe from \cref{fig:nystrom-breakdown-r5000} that the total runtime of the CPU-only implementation is dominated by the local matrix multiplication longer.

\textbf{\emph{Redist} algorithm scales better}.
If we look at the performance of the GPU implementation, we see that the \emph{Redist} algorithm scales better.
This is expected, as the bandwidth cost of the redistribution (using all-to-all) is $O(nr/P)$.
The communication cost of the \emph{No-Redist} algorithm is that of a reduce-scatter with bandwidth cost $O(r^2)$.
The CPU implementations perform similarly in overall time, but this is because the time is more heavily dominated by local matrix multiplications.
We explore the empirical communication costs further in \cref{subsec:communication-comparison}.

\subsection{\scalebox{0.975}{Communication Cost of Redist vs No-Redist}}
\label{subsec:communication-comparison}

\begin{figure}[!b]
	\centering
	\begin{subfigure}{0.475\columnwidth}
		\centering
		\input{plots/nystrom_cpu_500.tex}
		\caption{r = 500 \label{fig:nystrom-communication-cpu-500}}      
	\end{subfigure}
	\hfill
	\begin{subfigure}{0.475\columnwidth}
		\centering
		\input{plots/nystrom_cpu_5000.tex}
		\caption{r = 5000 \label{fig:nystrom-communication-cpu-5000}}
	\end{subfigure}
	
	\caption{Communication cost of the two algorithms on CPU-only systems.}
	\label{fig:nystrom-communication-cpu}
\end{figure}
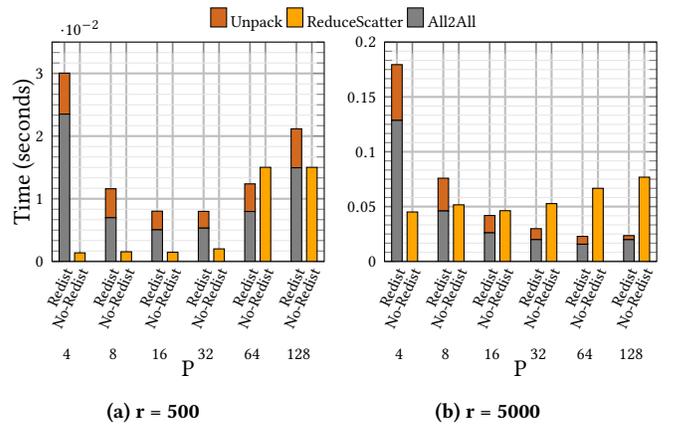

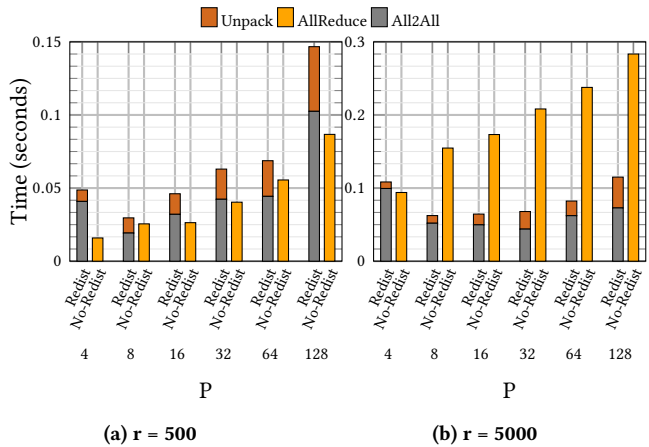
\begin{figure}[!t]
	\centering
	
	\begin{subfigure}{0.475\columnwidth}
		\centering
		\input{plots/nystrom_gpu_500.tex}
		\caption{r = 500 \label{fig:nystrom-communication-gpu-500}}
	\end{subfigure}
	\hfill
	\begin{subfigure}{0.475\columnwidth}
		\centering
		\input{plots/nystrom_gpu_5000.tex}
		\caption{r = 5000 \label{fig:nystrom-communication-gpu-5000}}
	\end{subfigure}
	
	\caption{Communication cost of the two algorithms on GPU-equipped systems.}
	\label{fig:nystrom-communication-gpu}
\end{figure}

Both of our algorithms have same amount of computation but the main difference comes in communication.
Hence, we focus only on evaluating the communication cost for these two algorithms when approximating the CIFAR10 kernel matrix using $r=500$ and $r=5000$.
The communication for the \emph{No-Redist} algorithm involves reduce-scatter of $O(r^2)$ input entries while for the \emph{Redist} algorithm involves all-to-all of $O(nr/P)$ entries per processor.
We use column major storage for the matrices, hence, \emph{Redist} algorithm requires unpacking after all-to-all.
The bandwidth bound communication cost is $O(r^2)$ for \emph{No-Redist} vs $O(nr/P)$ for \emph{Redist}.
Consequently, we expect the communication cost to scale with $P$ for the \emph{Redist} algorithm but not for the \emph{No-Redist} algorithm.
In Fig.~\ref{fig:nystrom-communication-cpu} we show the comparison for the CPU-only implementation and in Fig.~\ref{fig:nystrom-communication-gpu} we present it for the GPU implementations.
Our observations are as follows.

\textbf{Experiment matches theory for the CPU-only implementations}.
From \cref{fig:nystrom-communication-cpu-5000}, for $r=5000$, we observe that the communication cost starts higher for \emph{Redist} algorithm than for the \emph{No-Redist} algorithm in the case of small $P$, but it switches and becomes cheaper in the case of larger $P$.
The switch happens roughly around $P=\frac{n}{r}$, which matches exactly with the expectation.
For $r=500$ (\cref{fig:nystrom-communication-cpu-500}) we observe communication cost of the \emph{Redist} algorithm to decrease and increase again.
We attribute that to the communication becoming latency bound or perhaps a change in underlying algorithm.

\textbf{Communication in GPU implementation does not follow the same pattern}.
From \cref{fig:nystrom-communication-gpu} we observe that communication for \emph{No-Redist} increases at a rate close to $\log{p}$ and for \emph{Redist} algorithm the communication does not scale similar to the CPU-only experiments.
We suspect one reason for that to be due to differences within the CUDA-aware MPI implementation.
In particular, we initially observed that reduce scatter scaled very poorly within our GPU implementation and in a microbenchmark.
We alleviated this problem by replacing the reduce-scatter with allreduce for the GPU implementation which gives more reasonable performance despite its higher theoretical costs.

%% file: plots/cpp_python.tex

\pgfplotstableread[col sep=comma]{plots/data/cpp_python.csv}\datafile

\begin{tikzpicture}
\begin{axis}[\cppVSpython
    height=0.25\textwidth,
    after end axis/.code={ 
      \node(A14) at (axis cs:Algo-\impla-\proca,0) {};
      \node(A24) at (axis cs:Algo-\implb-\proca,0) {};
      \node(A34) at (axis cs:Algo-\implc-\proca,0) {};
      \node(A18) at (axis cs:Algo-\impla-\procb,0) {};
      \node(A28) at (axis cs:Algo-\implb-\procb,0) {};
      \node(A38) at (axis cs:Algo-\implc-\procb,0) {};
      \node(A116) at (axis cs:Algo-\impla-\procc,0) {};
      \node(A216) at (axis cs:Algo-\implb-\procc,0) {};
      \node(A316) at (axis cs:Algo-\implc-\procc,0) {};
      \node(A132) at (axis cs:Algo-\impla-\procd,0) {};
      \node(A232) at (axis cs:Algo-\implb-\procd,0) {};
      \node(A332) at (axis cs:Algo-\implc-\procd,0) {};
      \node(A164) at (axis cs:Algo-\impla-\proce,0) {};
      \node(A264) at (axis cs:Algo-\implb-\proce,0) {};
      \node(A364) at (axis cs:Algo-\implc-\proce,0) {};
      \node(A1128) at (axis cs:Algo-\impla-\procf,0) {};
      \node(A2128) at (axis cs:Algo-\implb-\procf,0) {};
      \node(A3128) at (axis cs:Algo-\implc-\procf,0) {};
      \node[yshift=-30, xshift=-5] at (A24) {\scriptsize $4$};
      \node[yshift=-30, xshift=-5] at (A28) {\scriptsize $8$};
      \node[yshift=-30, xshift=-5] at (A216) {\scriptsize $16$};
      \node[yshift=-30, xshift=-5] at (A232) {\scriptsize $32$};
      \node[yshift=-30, xshift=-5] at (A264) {\scriptsize $64$};
      \node[yshift=-30, xshift=-5] at (A2128) {\scriptsize $128$};
    },
    x tick label style={font=\scriptsize},
    y tick label style={font=\scriptsize,/pgf/number format/fixed,/pgf/number format/precision=2},
    ymajorgrids
    ]
    \pgfplotsset{cycle list={black, fill=color7 \\ black, fill=color2 \\ black, fill=color1 \\ black, fill=color4 \\ black, fill=color5 \\}}

    \addplot table[x=alg, y expr=(\thisrow{gatherATime})] {\datafile};
    \addplot table[x=alg, y expr=(\thisrow{gatherBTime})] {\datafile};
    \addplot table[x=alg, y expr=(\thisrow{localMultiplyTime})] {\datafile};
    \addplot table[x=alg, y expr=(\thisrow{scatterReduceTime})] {\datafile};
    \addplot table[x=alg, y expr=(\thisrow{slateTotalTime})] {\datafile};
    \iflegend
    	\legend{Gather A, Gather B, Local Multiply, Reduce Scatter, SLATE Total};
    \fi
  \end{axis}
\end{tikzpicture}


%% file: plots/matmul_cpu_500.tex
\begin{tikzpicture}
     \pgfplotstableread[col sep=comma]{plots/data/matmul_cpu_500.csv}\datafile
	
     \begin{axis}[\matmulCPUrankFiveH
      after end axis/.code={ 
      \node(A14) at (axis cs:Algo-\gen-\proca,0) {};
      \node(A24) at (axis cs:Algo-\comm-\proca,0) {};
      \node(A18) at (axis cs:Algo-\gen-\procb,0) {};
      \node(A28) at (axis cs:Algo-\comm-\procb,0) {};
      \node(A116) at (axis cs:Algo-\gen-\procc,0) {};
      \node(A216) at (axis cs:Algo-\comm-\procc,0) {};
      \node(A132) at (axis cs:Algo-\gen-\procd,0) {};
      \node(A232) at (axis cs:Algo-\comm-\procd,0) {};
      \node(A164) at (axis cs:Algo-\gen-\proce,0) {};
      \node(A264) at (axis cs:Algo-\comm-\proce,0) {};
      \node(A1128) at (axis cs:Algo-\gen-\procf,0) {};
      \node(A2128) at (axis cs:Algo-\comm-\procf,0) {};
      \node[yshift=-30, xshift=-5] at (A24) {\scriptsize $4$};
      \node[yshift=-30, xshift=-5] at (A28) {\scriptsize $8$};
      \node[yshift=-30, xshift=-5] at (A216) {\scriptsize $16$};
      \node[yshift=-30, xshift=-5] at (A232) {\scriptsize $32$};
      \node[yshift=-30, xshift=-5] at (A264) {\scriptsize $64$};
      \node[yshift=-30, xshift=-5] at (A2128) {\scriptsize $128$};
    },
    x tick label style={font=\scriptsize},
    y tick label style={font=\scriptsize,/pgf/number format/fixed,/pgf/number format/precision=2},
    grid=both,
    major grid style={line width=0.8pt, draw=gray!50},
     minor grid style={line width=0.4pt, draw=gray!20},
    minor tick num=5
    ]
    \pgfplotsset{cycle list={ black, fill=color3 \\ black, fill=color2  \\}}
    
     \addplot table[x=alg, y expr=(\thisrow{genBTime})] {\datafile};
    \addplot table[x=alg, y expr=(\thisrow{gatherBTime})] {\datafile};

    \iflegend
    	\legend{Generate $\M{\Omega}$, Gather $\M{\Omega}$};
    \fi
  \end{axis}
\end{tikzpicture}


%% file: plots/matmul_cpu_5000.tex
\pgfplotstableread[col sep=comma]{plots/data/matmul_cpu_5000.csv}\datafile

\begin{tikzpicture}
\begin{axis}[\matmulCPUrankFiveT
    after end axis/.code={ 
      \node(A14) at (axis cs:Algo-\gen-\proca,0) {};
      \node(A24) at (axis cs:Algo-\comm-\proca,0) {};
      \node(A18) at (axis cs:Algo-\gen-\procb,0) {};
      \node(A28) at (axis cs:Algo-\comm-\procb,0) {};
      \node(A116) at (axis cs:Algo-\gen-\procc,0) {};
      \node(A216) at (axis cs:Algo-\comm-\procc,0) {};
      \node(A132) at (axis cs:Algo-\gen-\procd,0) {};
      \node(A232) at (axis cs:Algo-\comm-\procd,0) {};
      \node(A164) at (axis cs:Algo-\gen-\proce,0) {};
      \node(A264) at (axis cs:Algo-\comm-\proce,0) {};
      \node(A1128) at (axis cs:Algo-\gen-\procf,0) {};
      \node(A2128) at (axis cs:Algo-\comm-\procf,0) {};
      \node[yshift=-30, xshift=-5] at (A24) {\scriptsize $4$};
      \node[yshift=-30, xshift=-5] at (A28) {\scriptsize $8$};
      \node[yshift=-30, xshift=-5] at (A216) {\scriptsize $16$};
      \node[yshift=-30, xshift=-5] at (A232) {\scriptsize $32$};
      \node[yshift=-30, xshift=-5] at (A264) {\scriptsize $64$};
      \node[yshift=-30, xshift=-5] at (A2128) {\scriptsize $128$};
    },
    x tick label style={font=\scriptsize},
    y tick label style={font=\scriptsize,/pgf/number format/fixed,/pgf/number format/precision=2},
        grid=both,
    major grid style={line width=0.8pt, draw=gray!50},
     minor grid style={line width=0.4pt, draw=gray!20},
    minor tick num=5
    ]
    \pgfplotsset{cycle list={ black, fill=color3 \\ black, fill=color2  \\}}
    
     \addplot table[x=alg, y expr=(\thisrow{genBTime})] {\datafile};
    \addplot table[x=alg, y expr=(\thisrow{gatherBTime})] {\datafile};

    \iflegend
    \fi
  \end{axis}

\end{tikzpicture}


%% file: plots/matmul1gen_cpu_scaling.tex

\pgfplotstableread[col sep=comma]{plots/data/matmul1gen_cpu_data.csv}\datafile

\begin{tikzpicture}
\begin{axis}[
  xlabel={P},
  ylabel={Time (sec)},
  ybar stacked,
  bar width=15pt,
  legend style={
	  at={(0.5,1.02)},
	  anchor=south,
	  legend columns=-1,
	  draw=none,
	  font=\scriptsize
  },
  symbolic x coords={256,512,768,1024},
  xtick=data,
  x tick label style={font=\scriptsize},
  y tick label style={font=\scriptsize,/pgf/number format/fixed,/pgf/number format/precision=2},
  ymin=0,
  width=\columnwidth,
  height=0.6\columnwidth,
  grid=both,
  grid style={line width=.1pt, draw=gray!30},
  major grid style={line width=.2pt,draw=gray!50},
  minor y tick num=1,
  ymajorgrids=true,
  yminorgrids=true,
]

\addplot[fill=color3, draw=black, line width=0.5pt]
  table[x=nprocs, y=gen_b_time] {\datafile};
\addlegendentry{Generate $\M{\Omega}$}

\addplot[fill=color1, draw=black, line width=0.5pt]
  table[x=nprocs, y=local_multiply_time] {\datafile};
\addlegendentry{Local Multiply}

\end{axis}

\end{tikzpicture}

%% file: plots/lineplot500.tex
\begin{tikzpicture}
    \begin{axis}[
        xlabel={P},
        ylabel={Time (seconds)},
        ymin=0,         grid=major,
        xmode=log,
	log basis x=2,
	ymode=log,
	log basis y=2,
	width=4.8cm, height=4.5cm,
	legend style={
        at={(0.25,1.15)},     
        anchor=west,         
    },
     legend columns=2,
      legend style={draw=none, cells={align=left}, nodes={scale=0.7}},
        xtick={2^2,2^3,2^4,2^5,2^6,2^7},
        ytick={2^-3,2^-2,2^-1, 2^0, 2^1},
    ]
        \pgfplotsset{cycle list={black, fill=color1 \\ black, fill=color2 \\ black, fill=color3 \\ black, fill=color4 \\ black, fill=color5 \\ black, fill=color9 \\}}
    \addplot[
        mark=square,
        color=color1,
        thick,
    ] table [x=proc, y=1DredCPU, col sep=comma] {plots/data/lineplot500.csv};
    \addlegendentry{Redist-CPU}
    
    \addplot[
        mark=*,
        color=color2,
        thick,
    ] table [x=proc, y=1DnoredCPU, col sep=comma] {plots/data/lineplot500.csv};
    \addlegendentry{No-Redist-CPU}
    
    \addplot[
        mark=triangle,
        color=color3,
        thick,
    ] table [x=proc, y=1DredGPU, col sep=comma] {plots/data/lineplot500.csv};
    \addlegendentry{Redist-GPU}
    
    \addplot[
        mark=x,
        color=color4,
        thick,
    ] table [x=proc, y=1DnoredGPU, col sep=comma] {plots/data/lineplot500.csv};
    \addlegendentry{No-Redist-GPU}
    
    \end{axis}
\end{tikzpicture}

%% file: plots/lineplot5000.tex
\begin{tikzpicture}
    \begin{axis}[
        xlabel={P},
        ymin=0,
        grid=major,
        xmode=log,
	log basis x=2,
	ymode=log,
	log basis y=2,
	width=4.8cm, height=4.5cm,
	 legend pos={north west},
  	legend columns=2,
  	legend style={draw=none, cells={align=left}, nodes={scale=0.7}},
        xtick={2^2,2^3,2^4,2^5,2^6,2^7},
        ytick={2^-2,2^-1,2^-0, 2^1, 2^2, 2^3, 2^4},
   ]

    \addplot[
        mark=square,
        color=color1,
        thick,
    ] table [x=proc, y=1DredCPU, col sep=comma] {plots/data/lineplot5000.csv};

    \addplot[
        mark=*,
        color=color2,
        thick,
    ] table [x=proc, y=1DnoredCPU, col sep=comma] {plots/data/lineplot5000.csv};
    
    \addplot[
        mark=triangle,
        color=color3,
        thick,
    ] table [x=proc, y=1DredGPU, col sep=comma] {plots/data/lineplot5000.csv};
    
    \addplot[
        mark=x,
        color=color4,
        thick,
    ] table [x=proc, y=1DnoredGPU, col sep=comma] {plots/data/lineplot5000.csv};
    
    \end{axis}
\end{tikzpicture}

%% file: plots/nystrom_cpu_5000_all.tex

\pgfplotstableread[col sep=comma]{plots/data/nystrom_cpu_5000.csv}\datafile

\begin{tikzpicture}
 \begin{axis}[\nystromCPUrankFiveTall
    after end axis/.code={ 
      \node(A14) at (axis cs:Algo-\algoa-\proca,0) {};
      \node(A24) at (axis cs:Algo-\algob-\proca,0) {};
      \node(A18) at (axis cs:Algo-\algoa-\procb,0) {};
      \node(A28) at (axis cs:Algo-\algob-\procb,0) {};
      \node(A116) at (axis cs:Algo-\algoa-\procc,0) {};
      \node(A216) at (axis cs:Algo-\algob-\procc,0) {};
      \node(A132) at (axis cs:Algo-\algoa-\procd,0) {};
      \node(A232) at (axis cs:Algo-\algob-\procd,0) {};
      \node(A164) at (axis cs:Algo-\algoa-\proce,0) {};
      \node(A264) at (axis cs:Algo-\algob-\proce,0) {};
      \node(A1128) at (axis cs:Algo-\algoa-\procf,0) {};
      \node(A2128) at (axis cs:Algo-\algob-\procf,0) {};
      \node[yshift=-35, xshift=-5] at (A24) {\scriptsize $4$};
      \node[yshift=-35, xshift=-5] at (A28) {\scriptsize $8$};
      \node[yshift=-35, xshift=-5] at (A216) {\scriptsize $16$};
      \node[yshift=-35, xshift=-5] at (A232) {\scriptsize $32$};
      \node[yshift=-35, xshift=-5] at (A264) {\scriptsize $64$};
      \node[yshift=-35, xshift=-5] at (A2128) {\scriptsize $128$};
    },
    x tick label style={font=\scriptsize},
    y tick label style={font=\scriptsize,/pgf/number format/fixed,/pgf/number format/precision=2},
        grid=both,
    major grid style={line width=0.8pt, draw=gray!50},
     minor grid style={line width=0.4pt, draw=gray!20},
    minor tick num=5
    ]
    \pgfplotsset{cycle list={black, fill=color3 \\ black, fill=color1 \\ black, fill=color5 \\ black, fill=color9 \\ black, fill=color8 \\ black, fill=color4 \\}}
    
    \addplot table[x=alg, y expr=(\thisrow{genOmegaTime})] {\datafile};
     \addplot table[x=alg, y expr=(\thisrow{dgemm1Time})] {\datafile};
    \addplot table[x=alg, y expr=(\thisrow{all2allTime})] {\datafile};
    \addplot table[x=alg, y expr=(\thisrow{unpackTime})] {\datafile};
     \addplot table[x=alg, y expr=(\thisrow{dgemm2Time})] {\datafile};
      \addplot table[x=alg, y expr=(\thisrow{reduceScatterTime})] {\datafile};
    
    \iflegend
    	\legend{Generate $\M{\Omega}$, 1st MatMul, All2All, Unpack, 2nd MatMul, ReduceScatter};
    \fi
  \end{axis}
\end{tikzpicture}


%% file: plots/nystrom_gpu_5000_all.tex

\pgfplotstableread[col sep=comma]{plots/data/nystrom_gpu_5000.csv}\datafile

\begin{tikzpicture}
\begin{axis}[\nystromGPUrankFiveTall
    after end axis/.code={ 
      \node(A14) at (axis cs:Algo-\algoa-\proca,0) {};
      \node(A24) at (axis cs:Algo-\algob-\proca,0) {};
      \node(A18) at (axis cs:Algo-\algoa-\procb,0) {};
      \node(A28) at (axis cs:Algo-\algob-\procb,0) {};
      \node(A116) at (axis cs:Algo-\algoa-\procc,0) {};
      \node(A216) at (axis cs:Algo-\algob-\procc,0) {};
      \node(A132) at (axis cs:Algo-\algoa-\procd,0) {};
      \node(A232) at (axis cs:Algo-\algob-\procd,0) {};
      \node(A164) at (axis cs:Algo-\algoa-\proce,0) {};
      \node(A264) at (axis cs:Algo-\algob-\proce,0) {};
      \node(A1128) at (axis cs:Algo-\algoa-\procf,0) {};
      \node(A2128) at (axis cs:Algo-\algob-\procf,0) {};
      \node[yshift=-35, xshift=-5] at (A24) {\scriptsize $4$};
      \node[yshift=-35, xshift=-5] at (A28) {\scriptsize $8$};
      \node[yshift=-35, xshift=-5] at (A216) {\scriptsize $16$};
      \node[yshift=-35, xshift=-5] at (A232) {\scriptsize $32$};
      \node[yshift=-35, xshift=-5] at (A264) {\scriptsize $64$};
      \node[yshift=-35, xshift=-5] at (A2128) {\scriptsize $128$};
    },
    x tick label style={font=\scriptsize},
    y tick label style={font=\scriptsize,/pgf/number format/fixed,/pgf/number format/precision=2},
        grid=both,
    major grid style={line width=0.8pt, draw=gray!50},
     minor grid style={line width=0.4pt, draw=gray!20},
    minor tick num=5
    ]
       \pgfplotsset{cycle list={black, fill=color3 \\ black, fill=color1 \\ black, fill=color5 \\ black, fill=color9 \\ black, fill=color8 \\ black, fill=color4 \\}}
    
    \addplot table[x=alg, y expr=(\thisrow{genOmegaTime})] {\datafile};
     \addplot table[x=alg, y expr=(\thisrow{dgemm1Time})] {\datafile};
    \addplot table[x=alg, y expr=(\thisrow{all2allTime})] {\datafile};
    \addplot table[x=alg, y expr=(\thisrow{unpackTime})] {\datafile};
     \addplot table[x=alg, y expr=(\thisrow{dgemm2Time})] {\datafile};
      \addplot table[x=alg, y expr=(\thisrow{reduceScatterTime})] {\datafile};
    
    \iflegend
    \fi
  \end{axis}
\end{tikzpicture}


%% file: plots/nystrom_cpu_500.tex

\pgfplotstableread[col sep=comma]{plots/data/nystrom_cpu_500.csv}\datafile

\begin{tikzpicture}
\begin{axis}[\nystromCPUrankFiveH
    after end axis/.code={ 
      \node(A14) at (axis cs:Algo-\algoa-\proca,0) {};
      \node(A24) at (axis cs:Algo-\algob-\proca,0) {};
      \node(A18) at (axis cs:Algo-\algoa-\procb,0) {};
      \node(A28) at (axis cs:Algo-\algob-\procb,0) {};
      \node(A116) at (axis cs:Algo-\algoa-\procc,0) {};
      \node(A216) at (axis cs:Algo-\algob-\procc,0) {};
      \node(A132) at (axis cs:Algo-\algoa-\procd,0) {};
      \node(A232) at (axis cs:Algo-\algob-\procd,0) {};
      \node(A164) at (axis cs:Algo-\algoa-\proce,0) {};
      \node(A264) at (axis cs:Algo-\algob-\proce,0) {};
      \node(A1128) at (axis cs:Algo-\algoa-\procf,0) {};
      \node(A2128) at (axis cs:Algo-\algob-\procf,0) {};
      \node[yshift=-35, xshift=-5] at (A24) {\scriptsize $4$};
      \node[yshift=-35, xshift=-5] at (A28) {\scriptsize $8$};
      \node[yshift=-35, xshift=-5] at (A216) {\scriptsize $16$};
      \node[yshift=-35, xshift=-5] at (A232) {\scriptsize $32$};
      \node[yshift=-35, xshift=-5] at (A264) {\scriptsize $64$};
      \node[yshift=-35, xshift=-5] at (A2128) {\scriptsize $128$};
    },
    x tick label style={font=\scriptsize},
    y tick label style={font=\scriptsize,/pgf/number format/fixed,/pgf/number format/precision=2},
        grid=both,
    major grid style={line width=0.8pt, draw=gray!50},
     minor grid style={line width=0.4pt, draw=gray!20},
    minor tick num=5
    ]
    \pgfplotsset{cycle list={black, fill=color5 \\ black, fill=color4 \\ black, fill=color9 \\ black, fill=color4 \\ black, fill=color6 \\ black, fill=color6 \\ black, fill=color7 \\ black, fill=color8 \\ black, fill=color9 \\}}
    
    \addplot table[x=alg, y expr=(\thisrow{all2allTime})] {\datafile};
    \addplot table[x=alg, y expr=(\thisrow{reduceScatterTime})] {\datafile};
    \addplot table[x=alg, y expr=(\thisrow{unpackTime})] {\datafile};
    \iflegend
    	\legend{All2All,ReduceScatter,Unpack};
    \fi
  \end{axis}

\end{tikzpicture}


%% file: plots/nystrom_cpu_5000.tex

\pgfplotstableread[col sep=comma]{plots/data/nystrom_cpu_5000.csv}\datafile

\begin{tikzpicture}
 \begin{axis}[\nystromCPUrankFiveT
    after end axis/.code={ 
      \node(A14) at (axis cs:Algo-\algoa-\proca,0) {};
      \node(A24) at (axis cs:Algo-\algob-\proca,0) {};
      \node(A18) at (axis cs:Algo-\algoa-\procb,0) {};
      \node(A28) at (axis cs:Algo-\algob-\procb,0) {};
      \node(A116) at (axis cs:Algo-\algoa-\procc,0) {};
      \node(A216) at (axis cs:Algo-\algob-\procc,0) {};
      \node(A132) at (axis cs:Algo-\algoa-\procd,0) {};
      \node(A232) at (axis cs:Algo-\algob-\procd,0) {};
      \node(A164) at (axis cs:Algo-\algoa-\proce,0) {};
      \node(A264) at (axis cs:Algo-\algob-\proce,0) {};
      \node(A1128) at (axis cs:Algo-\algoa-\procf,0) {};
      \node(A2128) at (axis cs:Algo-\algob-\procf,0) {};
      \node[yshift=-35, xshift=-5] at (A24) {\scriptsize $4$};
      \node[yshift=-35, xshift=-5] at (A28) {\scriptsize $8$};
      \node[yshift=-35, xshift=-5] at (A216) {\scriptsize $16$};
      \node[yshift=-35, xshift=-5] at (A232) {\scriptsize $32$};
      \node[yshift=-35, xshift=-5] at (A264) {\scriptsize $64$};
      \node[yshift=-35, xshift=-5] at (A2128) {\scriptsize $128$};
    },
    x tick label style={font=\scriptsize},
    y tick label style={font=\scriptsize,/pgf/number format/fixed,/pgf/number format/precision=2},
        grid=both,
    major grid style={line width=0.8pt, draw=gray!50},
     minor grid style={line width=0.4pt, draw=gray!20},
    minor tick num=5
    ]
    \pgfplotsset{cycle list={black, fill=color5 \\ black, fill=color4 \\ black, fill=color9 \\ black, fill=color4 \\ black, fill=color6 \\ black, fill=color6 \\ black, fill=color7 \\ black, fill=color8 \\ black, fill=color9 \\}}
    
    \addplot table[x=alg, y expr=(\thisrow{all2allTime})] {\datafile};
    \addplot table[x=alg, y expr=(\thisrow{reduceScatterTime})] {\datafile};
    \addplot table[x=alg, y expr=(\thisrow{unpackTime})] {\datafile};
    \iflegend
    \fi
  \end{axis}
\end{tikzpicture}


%% file: plots/nystrom_gpu_500.tex

\pgfplotstableread[col sep=comma]{plots/data/nystrom_gpu_500.csv}\datafile

\begin{tikzpicture}
\begin{axis}[\nystromGPUrankFiveH
    after end axis/.code={ 
      \node(A14) at (axis cs:Algo-\algoa-\proca,0) {};
      \node(A24) at (axis cs:Algo-\algob-\proca,0) {};
      \node(A18) at (axis cs:Algo-\algoa-\procb,0) {};
      \node(A28) at (axis cs:Algo-\algob-\procb,0) {};
      \node(A116) at (axis cs:Algo-\algoa-\procc,0) {};
      \node(A216) at (axis cs:Algo-\algob-\procc,0) {};
      \node(A132) at (axis cs:Algo-\algoa-\procd,0) {};
      \node(A232) at (axis cs:Algo-\algob-\procd,0) {};
      \node(A164) at (axis cs:Algo-\algoa-\proce,0) {};
      \node(A264) at (axis cs:Algo-\algob-\proce,0) {};
      \node(A1128) at (axis cs:Algo-\algoa-\procf,0) {};
      \node(A2128) at (axis cs:Algo-\algob-\procf,0) {};
      \node[yshift=-35, xshift=-5] at (A24) {\scriptsize $4$};
      \node[yshift=-35, xshift=-5] at (A28) {\scriptsize $8$};
      \node[yshift=-35, xshift=-5] at (A216) {\scriptsize $16$};
      \node[yshift=-35, xshift=-5] at (A232) {\scriptsize $32$};
      \node[yshift=-35, xshift=-5] at (A264) {\scriptsize $64$};
      \node[yshift=-35, xshift=-5] at (A2128) {\scriptsize $128$};
    },
    x tick label style={font=\scriptsize},
    y tick label style={font=\scriptsize,/pgf/number format/fixed,/pgf/number format/precision=2},
        grid=both,
    major grid style={line width=0.8pt, draw=gray!50},
     minor grid style={line width=0.4pt, draw=gray!20},
    minor tick num=5
    ]
    \pgfplotsset{cycle list={black, fill=color5 \\ black, fill=color4 \\ black, fill=color9 \\ black, fill=color4 \\ black, fill=color5 \\ black, fill=color9 \\}}
    
     \addplot table[x=alg, y expr=(\thisrow{all2allTime})] {\datafile};
     \addplot table[x=alg, y expr=(\thisrow{reduceScatterTime})] {\datafile};
    \addplot table[x=alg, y expr=(\thisrow{unpackTime})] {\datafile};

    \iflegend
    	\legend{ All2All, AllReduce, Unpack};
    \fi
  \end{axis}
\end{tikzpicture}


%% file: plots/nystrom_gpu_5000.tex

\pgfplotstableread[col sep=comma]{plots/data/nystrom_gpu_5000.csv}\datafile

\begin{tikzpicture}
\begin{axis}[\nystromGPUrankFiveT
    after end axis/.code={ 
      \node(A14) at (axis cs:Algo-\algoa-\proca,0) {};
      \node(A24) at (axis cs:Algo-\algob-\proca,0) {};
      \node(A18) at (axis cs:Algo-\algoa-\procb,0) {};
      \node(A28) at (axis cs:Algo-\algob-\procb,0) {};
      \node(A116) at (axis cs:Algo-\algoa-\procc,0) {};
      \node(A216) at (axis cs:Algo-\algob-\procc,0) {};
      \node(A132) at (axis cs:Algo-\algoa-\procd,0) {};
      \node(A232) at (axis cs:Algo-\algob-\procd,0) {};
      \node(A164) at (axis cs:Algo-\algoa-\proce,0) {};
      \node(A264) at (axis cs:Algo-\algob-\proce,0) {};
      \node(A1128) at (axis cs:Algo-\algoa-\procf,0) {};
      \node(A2128) at (axis cs:Algo-\algob-\procf,0) {};
      \node[yshift=-35, xshift=-5] at (A24) {\scriptsize $4$};
      \node[yshift=-35, xshift=-5] at (A28) {\scriptsize $8$};
      \node[yshift=-35, xshift=-5] at (A216) {\scriptsize $16$};
      \node[yshift=-35, xshift=-5] at (A232) {\scriptsize $32$};
      \node[yshift=-35, xshift=-5] at (A264) {\scriptsize $64$};
      \node[yshift=-35, xshift=-5] at (A2128) {\scriptsize $128$};
    },
    x tick label style={font=\scriptsize},
    y tick label style={font=\scriptsize,/pgf/number format/fixed,/pgf/number format/precision=2},
        grid=both,
    major grid style={line width=0.8pt, draw=gray!50},
     minor grid style={line width=0.4pt, draw=gray!20},
    minor tick num=5
    ]
        \pgfplotsset{cycle list={black, fill=color5 \\ black, fill=color4 \\ black, fill=color9 \\ black, fill=color4 \\ black, fill=color5 \\ black, fill=color9 \\}}
    
     \addplot table[x=alg, y expr=(\thisrow{all2allTime})] {\datafile};
     \addplot table[x=alg, y expr=(\thisrow{reduceScatterTime})] {\datafile};
    \addplot table[x=alg, y expr=(\thisrow{unpackTime})] {\datafile};
    \iflegend
    \fi
  \end{axis}
\end{tikzpicture}


%% file: refs.bib
@inproceedings{Gates2019SLATE,
  author    = {Gates, Mark and Kurzak, Jakub and Charara, Ali and YarKhan, Asim and Dongarra, Jack},
  title     = {{SLATE}: Design of a Modern Distributed and Accelerated Linear Algebra Library},
  booktitle = {Proceedings of the International Conference for High Performance Computing, Networking, Storage and Analysis},
  series    = {SC '19},
  year      = {2019},
  pages     = {1--18},
  publisher = {ACM},
  address   = {New York, NY, USA},
  doi       = {10.1145/3295500.3356223},
}

@inproceedings{HigBY25,
author = {Higgins, Andrew James and Boman, Erik and Yamazaki, Ichitaro},
title = {A High Performance GPU CountSketch Implementation and Its Application to Multisketching and Least Squares Problems},
year = {2025},
isbn = {9798400718717},
publisher = {Association for Computing Machinery},
address = {New York, NY, USA},
url = {https://doi.org/10.1145/3731599.3767544},
doi = {10.1145/3731599.3767544},
booktitle = {Proceedings of the SC '25 Workshops of the International Conference for High Performance Computing, Networking, Storage and Analysis},
pages = {1808–1815},
numpages = {8},
location = {
},
series = {SC Workshops '25}
}

@misc{CheNRSPK25,
      title={GPU-Parallelizable Randomized Sketch-and-Precondition for Linear Regression using Sparse Sign Sketches}, 
      author={Tyler Chen and Pradeep Niroula and Archan Ray and Pragna Subrahmanya and Marco Pistoia and Niraj Kumar},
      year={2025},
      eprint={2506.03070},
      archivePrefix={arXiv},
      primaryClass={cs.DS},
      url={https://arxiv.org/abs/2506.03070}, 
}

@article{BalG22,
author = {Balabanov, Oleg and Grigori, Laura},
title = {Randomized Gram--Schmidt Process with Application to GMRES},
journal = {SIAM Journal on Scientific Computing},
volume = {44},
number = {3},
pages = {A1450-A1474},
year = {2022},
doi = {10.1137/20M138870X},
URL = {https://doi.org/10.1137/20M138870X},
}

@InProceedings{BalBGL23,
  title = 	 {Block Subsampled Randomized Hadamard Transform for Nyström Approximation on Distributed Architectures},
  author =       {Balabanov, Oleg and Beaup\`{e}re, Matthias and Grigori, Laura and Lederer, Victor},
  booktitle = 	 {Proceedings of the 40th International Conference on Machine Learning},
  pages = 	 {1564--1576},
  year = 	 {2023},
  volume = 	 {202},
  series = 	 {Proceedings of Machine Learning Research},
  month = 	 {23--29 Jul},
  publisher =    {PMLR},
  url = 	 {https://proceedings.mlr.press/v202/balabanov23a.html},
}

@article{MarT20,
  title={Randomized numerical linear algebra: Foundations and algorithms},
  author={Martinsson, Per-Gunnar and Tropp, Joel A},
  journal={Acta Numerica},
  volume={29},
  pages={403--572},
  year={2020},
  publisher={Cambridge University Press},
  doi={10.1017/S0962492920000021},
}

@article{Nys930,
  title={{\"U}ber die praktisch aufl{\"o}sung von integralgleichungen mit anwendungen auf randwertaufgaben},
  author={Nystr{\"o}m, Evert J},
  year={1930},
  doi={10.1007/BF02547521},
}

@article{WilS00,
  title={Using the Nystr{\"o}m method to speed up kernel machines},
  author={Williams, Christopher and Seeger, Matthias},
  journal={Advances in neural information processing systems},
  volume={13},
  year={2000},
  url={https://dl.acm.org/doi/10.5555/3008751.3008847},
}

@inproceedings{Ballard:MTTKRP:IPDPS18,
	author = {Ballard, Grey and Knight, Nicholas and Rouse, Kathryn},
	booktitle = {2018 IEEE International Parallel and Distributed Processing Symposium (IPDPS)},
	doi = {10.1109/IPDPS.2018.00065},
	pages = {557-567},
	title = {Communication Lower Bounds for Matricized Tensor Times {Khatri-Rao} Product},
	year = {2018}}

@article{Thakur:CollectiveCommunications:2005,
	author = {R. Thakur and R. Rabenseifner and W. Gropp},
	doi = {10.1177/1094342005051521},
	journal = {Intl. J. High Perf. Comp. App.},
	number = {1},
	title = {Optimization of Collective Communication Operations in MPICH},
	volume = {19},
	year = {2005}}

@Book{BV04,
  title     = {Convex Optimization},
  publisher = {Cambridge University Press},
  year      = {2004},
  author    = {S. Boyd and L. Vandenberghe},
  isbn      = {0521833787},
  url       = {https://web.stanford.edu/~boyd/cvxbook/},
}

@article{Chan:CollectiveCommunications:2007,
	author = {Chan, E. and Heimlich, M. and Purkayastha, A. and van de Geijn, R.},
	doi = {https://doi.org/10.1002/cpe.1206},
	journal = {Conc. and Comp.: Prac. and Exper.},
	number = {13},
	title = {Collective Communication: Theory, Practice, and Experience},
	volume = {19},
	year = {2007}}

@InProceedings{BR20,
  author    = {G. Ballard and K. Rouse},
  title     = {General Memory-Independent Lower Bound for MTTKRP},
  booktitle = {SIAM PP},
  year      = {2020},
  pages     = {1-11},
  month     = jan,
  doi       = {10.1137/1.9781611976137.1},
  url       = {https://epubs.siam.org/doi/abs/10.1137/1.9781611976137.1},
}

@Article{ACS90,
  Title                    = {Communication Complexity of {PRAM}s},
  Author                   = {A. Aggarwal and A. K. Chandra and M. Snir},
  Journal                  = {Theor. Comp. Sci.},
  Year                     = {1990},
  Number                   = {1},
  Volume                   = {71},
  Doi                      = {10.1016/0304-3975(90)90188-N},
  ISSN                     = {0304-3975},
  Url                      = {http://www.sciencedirect.com/science/article/pii/030439759090188N}
}

@article{LW49,
	author = {L. H. Loomis and H. Whitney},
	journal = {Bulletin of the American Mathematical Society},
	number = {10},
	publisher = {American Mathematical Society},
	title = {An Inequality Related to the Isoperimetric Inequality},
	url = {https://doi.org/},
	volume = {55},
	year = {1949},
	doi = {10.1090/S0002-9904-1949-09320-5}
}

@article{IRONY:JPDC04,
	author = {Dror Irony and Sivan Toledo and Alexander Tiskin},
	doi = {https://doi.org/10.1016/j.jpdc.2004.03.021},
	issn = {0743-7315},
	journal = {Journal of Parallel and Distributed Computing},
	number = {9},
	pages = {1017-1026},
	title = {Communication Lower Bounds for Distributed-Memory Matrix Multiplication},
	url = {https://www.sciencedirect.com/science/article/pii/S0743731504000437},
	volume = {64},
	year = {2004}}

@techreport{Christ:EECS-2013-61,
	author = {Christ, Michael and Demmel, James and Knight, Nicholas and Scanlon, Thomas and Yelick, Katherine A.},
	institution = {EECS Department, University of California, Berkeley},
	month = {May},
	number = {UCB/EECS-2013-61},
	title = {Communication Lower Bounds and Optimal Algorithms for Programs That Reference Arrays - Part 1},
	url = {http://www2.eecs.berkeley.edu/Pubs/TechRpts/2013/EECS-2013-61.html},
	year = {2013}}

@InProceedings{HK81,
  Title                    = {{I/O} complexity: The Red-Blue Pebble Game},
  Author                   = {J. W. Hong and H. T. Kung},
  Booktitle                = {STOC 1981},
  Year                     = {1981},
  Doi                      = {http://doi.acm.org/10.1145/800076.802486},
}

@InProceedings{ABGKR22,
  author      = {Al Daas, H. and Ballard, G. and Grigori, L. and Kumar, S. and Rouse, K.},
  title       = {Tight Memory-Independent Parallel Matrix Multiplication Communication Lower Bounds},
  booktitle   = {SPAA 2022},
  year        = {2022},
  doi         = {10.1145/3490148.3538552},
  institution = {arXiv},
  isbn        = {978-1-4503-9146-7/22/07},
  techversion = {ABGKR22},
}

@ARTICLE{BCKUW97,
  author={Bruck, J. and Ching-Tien Ho and Kipnis, S. and Upfal, E. and Weathersby, D.},
  journal={IEEE Trans. on Par. and Dist. Sys.},
  title={Efficient Algorithms for All-to-All Communications in Multiport Message-Passing Systems},
  year={1997},
  volume={8},
  number={11},
  doi={10.1109/71.642949}
}

@inproceedings{ABC+23,
  title={On the Arithmetic Intensity of Distributed-Memory Dense Matrix Multiplication Involving a Symmetric Input Matrix (SYMM)},
  author={Agullo, Emmanuel and Buttari, Alfredo and Coulaud, Olivier and Eyraud-Dubois, Lionel and Faverge, Mathieu and Franc, Alain and Guermouche, Abdou and Jego, Antoine and Peressoni, Romain and Pruvost, Florent},
  booktitle={2023 IEEE International Parallel and Distributed Processing Symposium (IPDPS)},
  pages={357--367},
  year={2023},
  organization={IEEE},
  doi={10.1109/IPDPS54959.2023.00044}
}

@INPROCEEDINGS{TMBD-2024,
	author={Liang, Tianyu and Murray, Riley and Buluç, Aydın and Demmel, James},
	booktitle={2024 IEEE International Parallel and Distributed Processing Symposium (IPDPS)},
	title={Fast multiplication of random dense matrices with sparse matrices},
	year={2024},
	volume={},
	number={},
	pages={52-62},
	doi={10.1109/IPDPS57955.2024.00014}
}

@inproceedings{salmon2011parallel,
  title={Parallel random numbers: as easy as 1, 2, 3},
  author={Salmon, John K and Moraes, Mark A and Dror, Ron O and Shaw, David E},
  booktitle={Proceedings of 2011 international conference for high performance computing, networking, storage and analysis},
  pages={1--12},
  year={2011},
  doi = {10.1145/2063384.2063405},
}

@article{GM-2016,
	author = {Gittens, Alex and Mahoney, Michael W.},
	title = {Revisiting the Nystr\"{o}m method for improved large-scale machine learning},
	year = {2016},
	issue_date = {January 2016},
	publisher = {JMLR.org},
	volume = {17},
	number = {1},
	issn = {1532-4435},
	journal = {J. Mach. Learn. Res.},
	month = jan,
	pages = {3977–4041},
	numpages = {65},
  url = {https://dl.acm.org/doi/abs/10.5555/2946645.3007070},
}

@article{LLW-2023,
	author  = {Jian Li and Yong Liu and Weiping Wang},
	title   = {Optimal Convergence Rates for Distributed Nystroem Approximation},
	journal = {Journal of Machine Learning Research},
	year    = {2023},
	volume  = {24},
	number  = {141},
	pages   = {1--39},
	url     = {http://jmlr.org/papers/v24/21-1049.html}
}

@techreport{K09,
title = {Learning Multiple Layers of Features from Tiny Images},
author = {Krizhevsky, Alex}, 
institution =  {Computer Science Department, University of Toronto}, 
year = {2009},
url = {https://www.cs.toronto.edu/~kriz/learning-features-2009-TR.pdf},
}

@article{park2025accuracy,
  title={Accuracy and stability of CUR decompositions with oversampling},
  author={Park, Taejun and Nakatsukasa, Yuji},
  journal={SIAM Journal on Matrix Analysis and Applications},
  volume={46},
  number={1},
  pages={780--810},
  year={2025},
  publisher={SIAM},
  doi = {10.1137/24M1660346},
}

@misc{bucci2025numerical,
      title={Numerical Stability of the Nystr\"om Method},
      author={Alberto Bucci and Yuji Nakatsukasa and Taejun Park},
      year={2025},
      eprint={2511.15583},
      archivePrefix={arXiv},
      primaryClass={math.NA},
      url={https://arxiv.org/abs/2511.15583},
}

@article{cortinovis2025sublinear,
  title={A sublinear-time randomized algorithm for column and row subset selection based on strong rank-revealing QR factorizations},
  author={Cortinovis, Alice and Ying, Lexing},
  journal={SIAM Journal on Matrix Analysis and Applications},
  volume={46},
  number={1},
  pages={22--44},
  year={2025},
  publisher={SIAM},
  doi = {10.1137/24M164063X},
}

@article{cortinovis2026adaptive,
  title={Adaptive randomized pivoting for column subset selection, DEIM, and low-rank approximation},
  author={Cortinovis, Alice and Kressner, Daniel},
  journal={SIAM Journal on Matrix Analysis and Applications},
  volume={47},
  number={1},
  pages={25--47},
  year={2026},
  publisher={SIAM},
  doi = {10.1137/24M1719189},
}

@article{talwalkar2013large,
  title={Large-scale SVD and manifold learning},
  author={Talwalkar, Ameet and Kumar, Sanjiv and Mohri, Mehryar and Rowley, Henry},
  journal={The Journal of Machine Learning Research},
  volume={14},
  number={1},
  pages={3129--3152},
  year={2013},
  publisher={JMLR. org},
  url={http://jmlr.org/papers/v14/talwalkar13a.html}
}

@article{frangella2023randomized,
  title={Randomized nystr{\"o}m preconditioning},
  author={Frangella, Zachary and Tropp, Joel A and Udell, Madeleine},
  journal={SIAM Journal on Matrix Analysis and Applications},
  volume={44},
  number={2},
  pages={718--752},
  year={2023},
  publisher={SIAM},
  doi = {10.1137/21M1466244},
}

@techreport{agullo2022task,
  title={Task-based randomized singular value decomposition and multidimensional scaling},
  author={Agullo, Emmanuel and Coulaud, Olivier and Denis, Alexandre and Faverge, Mathieu and Franc, Alain and Frigerio, Jean-Marc and Furmento, Nathalie and Guilbaud, Adrien and Jeannot, Emmanuel and Peressoni, Romain and others},
  year={2022},
  number={RR-9482},
  institution={Inria Bordeaux-Sud Ouest; Inrae-BioGeCo},
  url={https://inria.hal.science/hal-03773985v2}
}

@data{DS:NKTRHO_2023,
author = {Franc, Alain and Frigerio, Jean-Marc and Chancerel, Emilie and Salin, Franck and Thérond, Sylvie and Rimet, Frédéric and Bouchez, Agnès},
publisher = {Recherche Data Gouv},
title = {{Reads and pairwise distances from 10 samples of diatoms in Geneva lake}},
year = {2023},
version = {V1},
doi = {10.57745/NKTRHO},
url = {https://doi.org/10.57745/NKTRHO}
}
